\documentclass[11pt]{article}
	
	\newcommand{\blind}{0}
	
	\addtolength{\oddsidemargin}{-.5in}%
	\addtolength{\evensidemargin}{-.5in}%
	\addtolength{\textwidth}{1in}%
	\addtolength{\textheight}{1.3in}%
	\addtolength{\topmargin}{-.8in}%
    \makeatletter
    \renewcommand\section{\@startsection {section}{1}{\z@}%
                                       {-3.5ex \@plus -1ex \@minus -.2ex}%
                                       {2.3ex \@plus.2ex}%
                                       {\normalfont\fontfamily{phv}\fontsize{14}{17}\bfseries}}
    \renewcommand\subsection{\@startsection{subsection}{2}{\z@}%
                                         {-3.25ex\@plus -1ex \@minus -.2ex}%
                                         {1.5ex \@plus .2ex}%
                                         {\normalfont\fontfamily{phv}\fontsize{12}{14}\bfseries}}
    \renewcommand\subsubsection{\@startsection{subsubsection}{3}{\z@}%
                                        {-3.25ex\@plus -1ex \@minus -.2ex}%
                                         {1.5ex \@plus .2ex}%
                                         {\normalfont\normalsize\fontfamily{phv}\fontsize{11}{13}\selectfont}}
    \makeatother
	
	\usepackage{xcolor}
	\usepackage{url} 
	\usepackage{amsthm,amsmath,amsfonts,amssymb,amsbsy,nccmath}
    \usepackage[authoryear]{natbib}
    \usepackage[colorlinks,citecolor=blue,urlcolor=blue]{hyperref}
    \usepackage{algorithm2e}
    \usepackage{graphicx,psfrag,epsf}
    \usepackage{enumerate}
    \usepackage{placeins}
        \usepackage{comment}
    \graphicspath{ {./images/} }
    \usepackage{tabularx,booktabs}
    \usepackage{placeins}
    \usepackage{diagbox}
    \usepackage{multirow}
    \usepackage{adjustbox}
    \usepackage[createShortEnv]{proof-at-the-end}

    \newcommand\clearrow{\global\let\rowmac\relax}
    \newcolumntype{Y}{>{\centering\arraybackslash}X}
    
    \theoremstyle{plain}

    \newtheorem{theorem}{Theorem}
    
    \theoremstyle{remark}
    \newtheorem{definition}{Definition}

	
	
	%
    \makeatletter
    \setlength{\@fptop}{0pt}
    \makeatother
    
    \DeclareMathOperator*{\argmax}{arg\,max}  
    \DeclareMathOperator*{\argmin}{arg\,min}  

	\begin{document}
		
		\def\spacingset#1{\renewcommand{\baselinestretch}%
			{#1}\small\normalsize} \spacingset{1}
		
		\if0\blind
		{
			\title{\Large \bf \emph{Bayesian Clustering Prior with Overlapping Indices for Effective Use of Multisource External Data}}
			\author{Xuetao Lu $^a$ and J. Jack Lee $^{a*}$ \\
			$^a$ Department of Biostatistics, The University of Texas MD Anderson Cancer Center \\
			$^*$ Author for correspondence: jjlee@mdanderson.org}
			\date{}
			\maketitle
		} \fi
		
		\if1\blind
		{

            \title{\bf \emph{IISE Transactions} \LaTeX \ Template}
			\author{Author information is purposely removed for double-blind review}
			
\bigskip
			\bigskip
			\bigskip
			\begin{center}
				{\LARGE\bf \emph{IISE Transactions} \LaTeX \ Template}
			\end{center}
			\medskip
		} \fi
		\bigskip
		
	\begin{abstract}
The use of external data in clinical trials offers numerous advantages, such as reducing the number of patients, increasing study power, and shortening trial durations. In Bayesian inference, information in external data can be transferred into an informative prior for future borrowing (i.e., prior synthesis). However, multisource external data often exhibits heterogeneity, which can lead to information distortion during the prior synthesis. Clustering helps identifying the heterogeneity, enhancing the congruence between synthesized prior and external data, thereby preventing information distortion. Obtaining optimal clustering is challenging due to the trade-off between congruence with external data and robustness to future data. We introduce two overlapping indices: the overlapping clustering index (OCI) and the overlapping evidence index (OEI). Using these indices alongside a K-Means algorithm, the optimal clustering of external data can be identified by balancing the trade-off. Based on the clustering result, we propose a prior synthesis framework to effectively borrow information from multisource external data. By incorporating the (robust) meta-analytic predictive prior into this framework, we develop (robust) Bayesian clustering MAP priors. Simulation studies and real-data analysis demonstrate their superiority over commonly used priors in the presence of heterogeneity. Since the Bayesian clustering priors are constructed without needing data from the prospective study to be conducted, they can be applied to both study design and data analysis in clinical trials or experiments.
	\end{abstract}
			
	\noindent%
	{\it Keywords:} Information borrowing, Evidence Synthesis, Heterogeneity, Clustering, Bayesian hierarchical model.

	\spacingset{1.5} 

\section{Introduction}
\label{sec:intro}

Incorporating external data from multiple sources into the design and analysis of clinical trials or experiments is an area of intense research interest. The US Food and Drug Administration (FDA) has released guidance, Use of Real-world Evidence to Support Regulatory Decision-making for Medical Devices, to encourage the use of external data in new studies \citep{FDA2017}. This practice aims to enhance the efficiency of new trials by leveraging real-world data (RWD) to inform study parameters, potentially reducing sample size, increasing the power and precision of study outcomes, and accelerating trial timelines \citep{Spiegelhalter2004}. In a Bayesian framework, the cornerstone of RWD incorporation involves constructing informative priors from external data, a process referred to as evidence synthesis. Methods for synthesizing informative priors are well-studied. For example, the meta-analytic predictive (MAP) prior \citep{Neuenschwander2010,Schmidli2014} uses meta-analysis to summarize information from external data into an informative prior. The power prior (PP) \citep{CHEN2000PP,Ibrahim2003} adjusts the influence of external data on the analysis of current data based on its relevance and reliability, using a likelihood discounting approach. Commensurate priors \citep{Hobbs2011,Hobbs2012} and multisource exchangeability models (MEM) \citep{Kaizer2017} determine the degree of information borrowing from external data according to their relevance and consistency with the new data. The elastic prior \citep{Jiang2023} dynamically borrows information from external data through a monotonic function of a congruence measure between external data and new trial data.

The use of prior information is critical in both trial design and data analysis. During the trial design stage, where new trial data is not yet available, prior information is typically derived from domain knowledge or external data. In the data analysis stage, once new trial data is available, priors can be refined by evaluating the similarity between the external and new trial data. Constructing an informative prior that is suitable for both trial design and data analysis is challenging. For example, with the exception of the MAP prior, all of the aforementioned priors require the new trial data, which limits their applicability during the trial design stage.

Another challenge arises from the diversity and heterogeneity of multiple data sources, including differences in, study design, population characteristics, eligibility criteria and outcome measures \citep{Schmidli2014}. Such variability complicates the accurate transferring or borrowing of information from external data into a prior, potentially leading to information distortion and adversely affecting the analysis of the new trial. 
Therefore, it is crucial to accurately identify the heterogeneity across external datasets. MAP (or rMAP) prior accommodates this heterogeneity by using the linear mixed model with a random effect parameter \citep{RBEST2021}. However, a single parameter may not adequately capture complex heterogeneity structures, such as scenarios where external datasets include multiple clusters with varying degrees of homogeneity. The MEM approach attempts to address this challenge by measuring pairwise exchangeability among parameters estimated from external datasets. While it does not explicitly construct a prior that adapts to the heterogeneity structure. In this study, we focus on clustering methods, where heterogeneity is identified by partitioning parameters (associated with external datasets) into distinct clusters. A widely used approach is the Bayesian nonparametric clustering based on Dirichlet processes \citep{GERSHMAN20121}. For example, \cite{Chen2020} proposed a Bayesian clustering hierarchical model to dynamically partition sub-trials into clusters for efficient information borrowing in basket trials. However, in this method, the number of clusters is determined by a hyperparameter, making it challenging to establish an interpretable and unified criterion for selecting the optimal value of the hyperparameter across different applications \citep{lu2023overlapping}. Additionally, based on our knowledge, this approach has not been applied to prior synthesis with multisource external data. 

To address these challenges, we propose a novel approach based on overlapping coefficients \citep{Weitzman1970,Schmid2006}. Specifically, we introduce the overlapping clustering index (OCI) and a K-Means algorithm to identify the heterogeneity across external datasets. To measure the congruence between the synthesized prior and external data, we define the Overlapping Evidence Index (OEI). A higher OEI signifies more accurate information transfer from external data to the informative prior, reflecting stronger congruence with the external datasets.
However, there is an inherent trade-off between maximizing OEI and maintaining the robustness of the prior. By balancing this trade-off, we provide a criterion for identifying the optimal clustering of external datasets. Compared to aforementioned clustering hyperparameter in the nonparametric Bayesian methods, our criterion is more interpretable and maintains a unified meaning across different applications. Then, using the optimal clustering results, we introduce the (Robust) Bayesian Clustering Prior, which is constructed as a weighted sum of priors synthesized from individual clusters. This framework is highly flexible and can be integrated with other methods. For instance, we propose the Bayesian Clustering MAP (BCMAP) Prior and the Robust Bayesian Clustering MAP (rBCMAP) Prior by incorporating MAP and robust MAP priors, which possess desirable properties and can be used in both trial design and data analysis. 

Section \ref{sec:sec2} introduces the notation and assumptions of Bayesian evidence synthesis from multisource external data. With an example, we illustrate the information distortion due to incorrect heterogeneity identification and discuss the trade-off between evidence congruence and robustness. In Section \ref{sec:sec3}, we propose OCI and OEI to address the challenges of heterogeneity identification and the trade-off between evidence congruence and robustness. The construction of the Bayesian clustering prior, along with an example, is presented in Section \ref{sec:sec4}. We conduct simulation studies to compare our methods with existing methods in Section \ref{sec:sim}. Section \ref{sec:real_data} presents the application to a real dataset. A brief discussion is provided in Section \ref{sec:Con}.

\section{Bayesian Evidence Synthesis from Multisource External data}
\label{sec:sec2}

Let $Y_1,\ldots,Y_H$ denote external data from multiple sources. Assume $\theta$ to be the common parameter of interest. Bayesian evidence synthesis aims to create an informative prior of $\theta$ from $Y_1,\ldots,Y_H$, denoted as $\pi(\theta|Y_1,\ldots,Y_H)$. Then, for any new data $Y^*$, the inference of $\theta$ can borrow the information from $\pi(\theta|Y_1,\ldots,Y_H)$ through $p(\theta|Y^*) \propto L(Y^*|\theta)\pi(\theta|Y_1,\ldots,Y_H)$, where $L(Y^*|\theta)$ is the likelihood. 
In this study, we assume $Y^*$ is unknown; in other words, $Y^*$ has no effect on the construction of $\pi(\theta|Y_1,\ldots,Y_H)$. In Bayesian inference, the prior distribution represents the beliefs or information about a parameter before observing the new data. Avoiding ``use the data twice'' \citep{Carlin2000EmpiricalBP} is fundamental for maintaining the integrity of the Bayesian updating process. Compared to the most existing methods, such as the MEM and power prior, which use $Y^*$ first in the prior and then in the likelihood, this assumption strictly adheres to the isolation principle. It makes $\pi(\theta|Y_1,\ldots,Y_H)$ applicable in both trial design (without $Y^*$) and data analysis (with $Y^*$). Another assumption in this study is the exclusion of covariate information. This assumption stems from the practical challenges of obtaining such information. For instance, patient-level data may be restricted to public, even for research purposes. Even when this information is accessible, the available covariates often vary across data sources. For example, data source 1 might include covariates ${X_1, X_2, X_3}$, data source 2 might include ${X_2, X_3, X_5}$, and data source 3 might include ${X_3, X_4, X_5}$, leaving only $X_3$ as a common covariate. In such cases, inference relying solely on $X_3$, such as covariate adjustment, may lead to questionable conclusions.

\begin{figure}[t!]
\centering
\includegraphics[width=.9\textwidth]{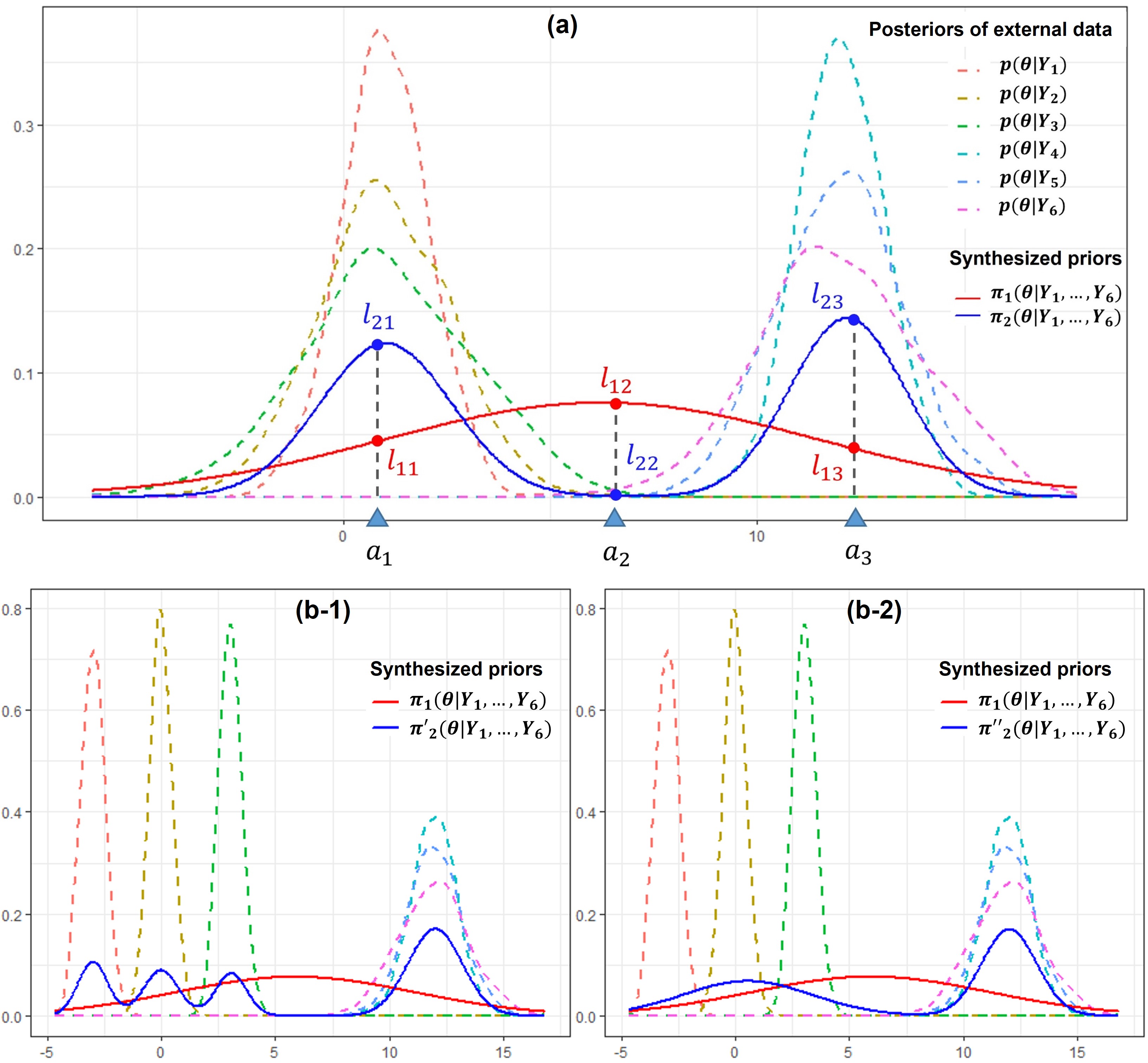}
\caption[challenge1]{Information distortion and the trade-off between evidence congruence and robustness. Panel (a) shows an example illustrating the information distortion caused by heterogeneity with prior $\pi_2$ more appropriate than prior $\pi_1$. In Panel (b-1), the prior $\pi'_2$ has stronger evidence congruence but weaker robustness. In Panel (b-2), the prior $\pi''_2$ has weaker evidence congruence but stronger robustness. Panels (b-1) and (b-2) demonstrate the trade-off between evidence congruence and robustness.}
\label{fig:challenge}
\end{figure}

In practice, heterogeneity often exists among data sources, such as multi-regional data or data from multiple health centers. One approach to address this issue is to filter out certain datasets to achieve homogeneity. However, without knowledge of covariate information or the new data $Y^*$, this method risks losing valuable information, as it is hard to determine which data sources should be excluded. An alternative approach is to include all external datasets. In this scenario, the evidence synthesis model used to construct the prior must be capable of effectively identifying the heterogeneity across the various data sources. Otherwise, it may distort the information transferred from $Y_1,\ldots,Y_H$ to $\pi(\theta|Y_1,\ldots,Y_H)$. The information of $\theta$ in $Y_1,\ldots,Y_H$ is contained in $p(\theta|Y_1),\ldots,p(\theta|Y_H)$ where $p(\theta|Y_h)$, $h=1,\ldots,H$, is obtained through $p(\theta|Y_h)\propto L(Y_h|\theta)\pi(\theta)$, and $\pi(\theta)$ can be either a weakly informative or an informative prior of $\theta$. (Note: Instead of using ${\theta_1, \ldots, \theta_H}$, we utilize the posteriors ${p(\theta | Y_1), \ldots, p(\theta | Y_H)}$ to reflect the heterogeneity of the external data. This notation emphasizes the consistency between the external data and the synthesized prior $\pi(\theta | Y_1, \ldots, Y_H)$.) The information distortion can be illustrated by the example shown in panel (a) of Figure \ref{fig:challenge}. The example includes six external datasets with corresponding posteriors $p(\theta|Y_1),\ldots,p(\theta|Y_6)$. It is obvious that these datasets are heterogeneous and can be partitioned into two clusters. Consider the information transferring at three points, $\theta=a_1$, $\theta=a_2$, and $\theta=a_3$. We use the likelihood of the priors to measure the transferred information. The corresponding likelihoods are $l_{11},l_{12},l_{13}$ for $\pi_1$ and $l_{21},l_{22},l_{23}$ for $\pi_2$. According to the posteriors $p(\theta|Y_1),\ldots,p(\theta|Y_6)$, it is clear that the likelihoods at $\theta=a_1$ and $\theta=a_3$ should be greater than the likelihood at $\theta=a_2$. However, $\pi_1$ performs oppositely, with $l_{11},l_{13}<l_{12}$. This occurs because the synthesizing method of $\pi_1$ incorrectly identifies the heterogeneity, thus distorting the information in the external data. Conversely, $\pi_2$ correctly pinpoints the heterogeneity and accurately reflects the information in the external data, with $l_{21},l_{23}<l_{22}$.

As discussed above, the quality of a synthesized prior can be evaluated by the consistency of information about $\theta$ in $Y_1,\ldots,Y_H$ and in $\pi(\theta|Y_1,\ldots,Y_H)$. We refer to this consistency as the evidence congruence of $\pi(\theta|Y_1,\ldots,Y_H)$ with respect to $Y_1,\ldots,Y_H$. In panel (a) of Figure \ref{fig:challenge}, it is clear that $\pi_2$ has stronger evidence congruence than $\pi_1$. However, higher evidence congruence is not always better. Robustness is another crucial criterion for evaluating the quality of a synthesized prior \citep{Schmidli2014}. It is easy to check that the evidence congruence of $\pi'_2$ in Panel (b-1) is greater than that of $\pi''_2$ in Panel (b-2). But we prefer $\pi''_2$ because it is more robust. $\pi'_2$ is constructed by identifying four clusters in the external datasets, whereas $\pi''_2$ assumes two clusters. In sum, both criteria of evidence congruence and robustness are closely related to the heterogeneity identification. Since $p(\theta|Y_1),\ldots,p(\theta|Y_H)$ contain all the information about $\theta$, an accurate clustering of $p(\theta|Y_1),\ldots,p(\theta|Y_H)$ can strike a good balance between evidence congruence and robustness, thereby helping to create a high-quality informative prior.

\section{Overlapping Indices}\label{sec:sec3}

Overlapping coefficient (OVL) is a measure of the intersection area between two probability density or mass functions. Let $X$ and $Y$ be two random variables with probability density or mass functions $f$ and $g$, respectively. $\Omega$ is the common support of $f$ and $g$. OVL can be defined in equation \eqref{eq:ovl-1}, where the integral expression can be used in this paper without loss of generality. 
\begin{equation}\label{eq:ovl-1}
     OVL(X,Y) =
     \begin{cases}
        \int_\Omega min\{f(t),g(t)\}\,dt, & \text{continuous}\\
        \sum_\Omega min\{f(t_i),g(t_i)\}I(t_i \in \Omega), & \text{discrete}
    \end{cases}
\end{equation}
Based on the concept of OVL, we propose two overlapping indices for the clustering of $p(\theta|Y_1),\ldots,p(\theta|Y_H)$ to address the challenges discussed in Section \ref{sec:sec2}.

\subsection{Overlapping Clustering Index and K-Means Clustering}\label{sec:ociKM}

In our prior work, \cite{lu2023overlapping}, we proposed the overlapping clustering index (OCI) and a K-Means clustering algorithm to cluster $n$ heterogeneous random variables into $K$ clusters ($n\geq K$). We have slightly modified that definition to fit our requirements here:

\begin{definition}[Overlapping Clustering Index]\label{oci}
Let $p(\theta|Y_h)$, $h=1,\ldots,H$, be the probability density function (pdf) or probability mass function (pmf) of posterior distributions obtained from the external datasets $Y_1,\ldots,Y_H$. The corresponding random variables are denoted as $\theta|Y_h$, $h=1,\ldots,H$. A map $S$ partitions them into $K$ clusters $\{G_1,...,G_K\}$, $1 \le K \le H$. For each cluster $G_m$, $m=1,...,K$, let $Z_m$ be a Gaussian random variable with density function $g_m$, which is the maximum likelihood estimation (MLE) obtained from the samples of the random variables in cluster $G_m$. Then, the OCI of this partition $K$ is defined as follows:
\begin{equation}\label{eq:oci}
     OCI_K=\sum^K_{m=1}\sum^H_{h=1}OVL(\theta|Y_h,Z_m)\cdot I(\theta|Y_h\in G_m),
\end{equation} 
where $I(\cdot)$ is a indicator function. (Note: The random variables $\theta|Y_h$ in the same cluster $G_m$ share the same mean (the cluster mean) and have finite variances. By the central limit theorem, when the number of $\theta|Y_h$ in cluster $G_m$ goes to infinity, their summation approximates a Gaussian distribution. Given the additive property of the expression $\sum^H_{h=1}OVL(\theta|Y_h,Z_m)\cdot I(\theta|Y_h\in G_m)$, we choose a Gaussian random variable $Z_m$.)
\end{definition}
The $OCI_K$ measures the overall within cluster homogeneity of the K-partition. Based on it, we can define the optimal $S_K^{(oci)*}$ for K partition as follows:
\begin{equation}\label{eq:ocis*}
S_K^{(oci)*} = \argmax_{S_K} OCI_K = \argmax_{S_K} \sum^K_{m=1}\sum^H_{h=1}OVL(\theta|Y_h,Z_m)\cdot I(\theta|Y_h\in G_m).
\end{equation}
With $S_K^{(oci)*}:\{G^*_1,...,G^*_K\}$, we can calculate the corresponding $OCI^*_K$:
\begin{equation}\label{eq:OCIK*}
OCI^*_K = \sum^K_{m=1}\sum^H_{h=1}OVL(\theta|Y_h,Z^*_m)\cdot I(\theta|Y_h\in G^*_m).
\end{equation}
An optimal clustering $S_K^{(km)*}$ can be found through the following K-Means algorithm:
\begin{equation}\label{eq:kmoci-1}
S_K^{(km)*} = \argmin_{S_K} \sum^K_{m=1}\sum^H_{h=1}d(\theta|Y_h,Z_m) \cdot I(\theta|Y_h\in G_m) 
\end{equation}
where $d(X,Y)=1-OVL(X,Y)$ is a distance measure between two random variables $X$ and $Y$. The equivalence between $S_K^{(oci)*}$ and $S_K^{(km)*}$ can be proved through the theorem as follows:
\begin{theorem}[Equivalence of clustering]\label{EqT}
The equivalence, $S_K^{(oci)*} = S_K^{(km)*}$, holds for any fixed $K$. 
\end{theorem}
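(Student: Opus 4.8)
The plan is to exhibit an exact affine relationship between the two objective functions, so that the K-Means objective is a strictly decreasing affine transform of $OCI_K$; minimizing one is then literally the same as maximizing the other, and their optimizers must coincide. Throughout, $K$ is held fixed, and both optimizations range over the same set of $K$-partitions $S_K$ of the random variables $\theta|Y_1,\ldots,\theta|Y_H$.

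First I would use the fact that the map $S$ partitions $\theta|Y_1,\ldots,\theta|Y_H$ into $\{G_1,\ldots,G_K\}$, so each $\theta|Y_h$ lies in exactly one cluster. This gives $\sum_{m=1}^K I(\theta|Y_h\in G_m)=1$ for every $h$, and hence the double sum of indicators collapses to a constant,
\begin{equation*}
\sum_{m=1}^K\sum_{h=1}^H I(\theta|Y_h\in G_m)=\sum_{h=1}^H 1 = H,
\end{equation*}
which does not depend on the partition $S_K$.

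Next I would substitute the definition $d(X,Y)=1-OVL(X,Y)$ into the K-Means objective in \eqref{eq:kmoci-1} and split the sum using the indicator identity above. For any fixed partition $S_K$, with its associated Gaussian MLE representatives $Z_m$ determined by the same within-cluster fitting rule that appears in $OCI_K$, this yields
\begin{equation*}
\sum_{m=1}^K\sum_{h=1}^H d(\theta|Y_h,Z_m)\,I(\theta|Y_h\in G_m)
= H - \sum_{m=1}^K\sum_{h=1}^H OVL(\theta|Y_h,Z_m)\,I(\theta|Y_h\in G_m)
= H - OCI_K .
\end{equation*}

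Finally, since $H$ is constant in $S_K$, the argmin over $S_K$ of the left-hand side equals the argmax over $S_K$ of $OCI_K$, giving $S_K^{(km)*}=\argmin_{S_K}(H-OCI_K)=\argmax_{S_K}OCI_K=S_K^{(oci)*}$. I do not expect a genuine obstacle here, as the argument is a one-line algebraic identity once the indicator sum is recognized; the only point requiring care is confirming that the cluster representatives $Z_m$ enter both objectives in exactly the same way (as the within-cluster Gaussian MLE), so that the affine relation $\text{(K-Means objective)}=H-OCI_K$ holds simultaneously as a function of every candidate partition, rather than merely being compared at the two optima separately.
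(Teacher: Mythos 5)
Your argument is correct and is essentially the paper's own proof: substitute $d(X,Y)=1-OVL(X,Y)$ into the K-Means objective and observe that the resulting constant term is independent of the partition, so the $\argmin$ of one objective coincides with the $\argmax$ of the other. The only difference is that you make explicit the identity $\sum_{m=1}^{K}\sum_{h=1}^{H} I(\theta|Y_h\in G_m)=H$ (and the requirement that the representatives $Z_m$ are defined identically in both objectives), which the paper leaves implicit.
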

\begin{proof}
\begin{equation*}
\begin{split}
S_K^{(oci)*} & = \argmin_{S_K} \sum^K_{m=1}\sum^H_{h=1}d(\theta|Y_h,Z_m) \cdot I(\theta|Y_h\in G_m) \\
& = \argmin_{S_K} \sum^K_{m=1}\sum^H_{h=1}[1-OVL(\theta|Y_h,Z_m)] \cdot I(\theta|Y_h\in G_m) \\
& = \argmax_{S_K} \sum^K_{m=1}\sum^H_{h=1}OVL(\theta|Y_h,Z_m)\cdot I(\theta|Y_h\in G_m) = S_K^{(km)*}.
\end{split}
\end{equation*}
\end{proof}

\subsection{Overlapping Evidence Index and Trade-off between Evidence Congruence and Robustness}\label{sec:OEIQSP}

For any fixed $K$, the heterogeneity among $p(\theta|Y_1),\ldots,p(\theta|Y_H)$ can be effectively identified through $S_K^{(oci)*}$. As demonstrated by the example in Section \ref{sec:sec2}, the evidence congruence increases as $K$ increases, and the information distortion reduces. However, the robustness of the synthesized prior weakens as $K$ increases because the amount of data in each cluster tends to decrease. It is desirable to find an optimal $K$ to strike a good balance for this trade-off. To help identifying the optimal $K$, we introduce the concept of an overlapping evidence index:

\begin{definition}[Overlapping Evidence Index]
Let $\theta|Y_1,\ldots,Y_H$ denote the random variable corresponding to the synthesized prior $\pi(\theta|Y_1,\ldots,Y_H)$. The overlapping evidence index (OEI) of $\pi(\theta|Y_1,\ldots,Y_H)$ is defined as the weighted sum of overlapping coefficients between the random variable $\theta|Y_1,\ldots,Y_H$ and each $\theta|Y_h$ for $h=1,\ldots,H$.
\begin{equation}\label{eq:oei}
     OEI(\pi)=\sum^H_{h=1}\frac{N_h}{N} \cdot OVL(\theta|Y_1,\ldots,Y_H,\theta|Y_h),
\end{equation} 
where $N_h$ is the sample size of $Y_h$ and $N=\sum^H_{h=1} N_h$.
\end{definition}

$OEI(\pi)$ lies within the interval $[0,1]$. It measures the consistency of synthesized prior $\pi(\theta|Y_1,\ldots,Y_H)$ with the information of external data (evidence). The higher the $OEI(\pi)$, the more congruent evidence transfers to $\pi(\theta|Y_1,\ldots,Y_H)$, from external data, and the less information distortion occurs. The details of choosing the optimal $K$ to balance the evidence congruence and robustness with the help of OEI will be introduced in Section \ref{sec:sec4}.

\section{Bayesian Clustering Prior}\label{sec:sec4}

For any fixed $K$, the optimal clustering map $S_K^{(oci)*}$ is denoted as follows: $$S_K^{(oci)*}:\{\theta|Y_1,\ldots,\theta|Y_H\} \rightarrow \{G_1,\ldots,G_K\}=\{\{\theta|Y_{11},\ldots, \theta|Y_{1n_1}\},\ldots,\{\theta|Y_{K1},\ldots, \theta|Y_{Kn_K}\}\}$$  where $n_m$ is the number of random variables in cluster $G_m$, $m=1,\ldots,K$. Based on $S_K^{(oci)*}$, the Bayesian clustering prior can be defined as follows:
\begin{definition}[Bayesian clustering prior]
The Bayesian clustering prior with $K$ clusters can be constructed as a weighted sum of informative priors synthesized from clusters, $G_m=\{Y_{m1},\ldots, Y_{mn_m}\}$, $m=1,\ldots,K$.
\begin{equation}\label{eq:bcp}
    \pi_K(\theta|Y_1,\ldots,Y_H) = \sum^K_{m=1}\frac{N_m}{N} \cdot \pi(\theta|Y_{m1},\ldots, Y_{mn_m})
\end{equation}
where $N_m$ is the number of observations in cluster $G_m$, and $N=\sum^K_{m=1}N_m$ is the size of all external data. Moreover, the prior in equation \eqref{eq:bcp} can be made more robust by adding a weighted weakly informative prior. We refer to this as the robust Bayesian clustering prior:
\begin{equation}\label{eq:rbcp}
    \pi^R_K(\theta|Y_1,\ldots,Y_H) = (1-w) \cdot \sum^K_{m=1}\frac{N_m}{N} \cdot \pi(\theta|Y_{m1},\ldots, Y_{mn_m}) + w \cdot \pi_0(\theta)
\end{equation}
where $\pi_0(\theta)$ is a weakly informative prior, $w\in [0,1]$ is the weight of $\pi_0(\theta)$.
\end{definition}

In equations \eqref{eq:bcp} and \eqref{eq:rbcp}, $\pi(\theta|Y_{m1},\ldots, Y_{mn_m})$ can be estimated through various methods, such as traditional Bayesian hierarchical models (BHM), MAP, or power priors. In this study, we choose MAP and robust MAP (rMAP) priors because they can be used in both trial design and data analysis stages. We refer to the resulting priors as Bayesian clustering MAP (BCMAP) and robust Bayesian clustering MAP (rBCMAP), respectively.
\cite{RBEST2021} developed the R package ``RBesT'' to implement the sampling of MAP and rMAP through Markov chain Monte Carlo (MCMC). To represent the MAP prior in parametric form, an expectation maximization (EM) algorithm is conducted to approximate the MCMC samples with a parametric mixture distribution. Since BCMAP (rBCMAP) is a weighted sum of MAPs, it can naturally be represented by a parametric mixture distribution as well. Thus, when conjugate MAP priors exist, a mixture of conjugate BCMAP priors can be used (see the real data example in Section \ref{sec:real_data}).

\begin{theorem}[Monotonic $OEIs$]\label{monoeis}
A BCMAP prior $\pi_K(\theta|Y_1,\ldots,Y_H)$ can be constructed based on $S_K^{(oci)*}$ for each $K$. Then, we can obtain a sequence of $OEI(\pi_K)$, $K=1,\ldots,H$, which monotonically increase as $K$ increases, i.e., $0 \leq OEI(\pi_1) \leq \ldots \leq OEI(\pi_H) \leq 1$. 
\end{theorem}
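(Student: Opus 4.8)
The plan is to separate the two outer bounds from the internal monotonicity. The outer bounds are immediate: because each overlapping coefficient lies in $[0,1]$ and the weights $N_h/N$ sum to one, every $OEI(\pi_K)$ is a convex combination of numbers in $[0,1]$, so $0\le OEI(\pi_K)\le 1$ for all $K$, which already yields $0\le OEI(\pi_1)$ and $OEI(\pi_H)\le 1$. For the chain $OEI(\pi_K)\le OEI(\pi_{K+1})$ I would first rewrite the index through the identity $OVL(X,Y)=1-d_{\mathrm{TV}}(X,Y)$, where $d_{\mathrm{TV}}$ is the total variation distance between the two densities. This recasts the claim as the equivalent statement that the weighted total variation $\sum_{h=1}^{H}\frac{N_h}{N}\,d_{\mathrm{TV}}(\pi_K,\,p(\theta\mid Y_h))$ is non-increasing in $K$, i.e.\ that the BCMAP mixture grows closer, in this weighted sense, to the individual posteriors as the number of clusters increases.

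The core of the argument is a refinement step: whenever a clustering is refined by splitting one cluster $G_m$ into two nonempty pieces $G_m^{a}, G_m^{b}$, I would show the associated BCMAP prior does not lose evidence congruence. Two facts drive this. First, the mixture inequality $OVL(\sum_m \tfrac{N_m}{N}\pi_m,\,p(\theta\mid Y_h))\ge \sum_m \tfrac{N_m}{N}\,OVL(\pi_m,\,p(\theta\mid Y_h))$, which follows from $\min\{\sum_m w_m a_m,\, b\}\ge \sum_m w_m \min\{a_m, b\}$ whenever $\sum_m w_m=1$. Second, a cluster MAP prior fitted to a more homogeneous group concentrates closer to each of its members; in the finest partition $K=H$ every cluster is a singleton and $\pi_m$ reduces to $p(\theta\mid Y_h)$, so overlaps are maximal there. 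Combining a single split with these facts, and using the Gaussian approximation $\pi_m\approx Z_m$ supplied by Definition~\ref{oci}, I would try to establish that refinement cannot decrease $OEI$.

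The main obstacle is twofold. The first difficulty is that the OCI-optimal clusterings $S_K^{(oci)*}$ arise from independent $K$-Means solutions (Theorem~\ref{EqT}) and are in general \emph{not} nested, so $S_{K+1}^{(oci)*}$ need not be a refinement of $S_K^{(oci)*}$ and the refinement step cannot be applied to them directly. I would bridge this by showing that maximizing within-cluster homogeneity (OCI) also maximizes prior--data congruence (OEI) at each fixed $K$, so that $OEI(\pi_{S_K^{(oci)*}})$ equals the maximum of $OEI$ over all $K$-partitions; then refining the OEI-optimal $K$-partition gives, via the refinement step, a value bounded above by $OEI(\pi_{S_{K+1}^{(oci)*}})$, closing the chain. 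The deeper difficulty lies inside the refinement step itself: a split simultaneously \emph{lowers} the mixture weight carried by each sub-cluster, which in isolation shrinks its overlap, and \emph{improves} the fit of the cluster prior to its members. The favorable net sign is not generic; it holds precisely because an OCI-optimal split removes genuine heterogeneity, and a naive ``well-separated, internally homogeneous'' idealization can even point the wrong way, since there $OEI\approx\sum_m (N_m/N)^2$, which \emph{decreases} under splitting. Controlling this competition rigorously, while tracking the overlapping mixture components rather than leaning on that idealization, is where I expect the real work to be, and I would anticipate needing the Gaussian/MAP structure together with the singleton limit to push it through.
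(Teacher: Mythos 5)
Your skeleton --- trivial outer bounds plus a one-cluster-split refinement step --- is the same one the paper uses, and the two obstacles you flag are precisely the points at which the paper's own proof is informal rather than places where you have gone off track. The paper simply asserts that some cluster $G^{(K)}_m$ of $S_K^{(oci)*}$ is split into two clusters of $S_{K+1}^{(oci)*}$, i.e., it assumes the nestedness of the K-Means optima that you correctly identify as unjustified, and it obtains the key inequality from the unproved assertion that greater within-cluster homogeneity forces larger overlaps with the cluster-level MAPs. So your diagnosis of where the difficulty lies is accurate. As a proof, however, your proposal is not closed. The bridge you propose (that the OCI-optimal $K$-partition is also OEI-optimal, so that it suffices to refine the OEI-optimal partition) is itself a nontrivial claim --- OCI is an unweighted sum of overlaps with the Gaussian fits $Z_m$, while OEI is an $N_h/N$-weighted sum of overlaps with the full mixture prior --- and you give no argument for it. The refinement step is likewise left open: your mixture inequality $\min\{\sum_m w_m a_m,\,b\}\ge\sum_m w_m\min\{a_m,b\}$ is correct and yields a lower bound on $OEI(\pi_{K+1})$ of the form $\sum_m\sum_{h\in G_m}\tfrac{N_h}{N}\tfrac{N_m}{N}\,OVL\bigl(\pi_m,\,p(\theta\mid Y_h)\bigr)$, but you would still need a comparable upper bound on $OEI(\pi_K)$, and the weight-dilution versus fit-improvement competition you describe is exactly what prevents the signs from lining up automatically.

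One further discrepancy, which affects both your plan and the paper's argument: the displayed inequality in the paper compares each posterior with the MAP of its own cluster, $OVL(\theta\mid G_m,\ \theta\mid Y_{mh})$, whereas the definition of OEI compares each posterior with the full mixture $\pi_K$. Identifying the two requires something like well-separated clusters, and even then the mixture weight $N_m/N$ enters, which is the source of the $\sum_m (N_m/N)^2$ behaviour you compute. In short, you have located the crux correctly and your auxiliary lemma is sound, but neither the nestedness bridge nor the refinement inequality is actually established, so the monotonicity chain $OEI(\pi_1)\le\cdots\le OEI(\pi_H)$ remains unproved in your write-up --- as, in fairness, it essentially does in the paper's.
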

\begin{proof}
Without losing generality, let us compare $OEI(\pi_K)$ and $OEI(\pi_{K+1})$, $K<H$. Correspondingly, we need to consider the difference between $S_K^{(oci)*}$ and $S_{K+1}^{(oci)*}$. There must be a cluster in $S_K^{(oci)*}$, say cluster $G^{(K)}_m$, to be partitioned to two clusters in $S_{K+1}^{(oci)*}$, say $G^{(K+1)}_m$ and $G^{(K+1)}_{K+1}$, where $G^{(K)}_m=G^{(K+1)}_m\cup G^{(K+1)}_{K+1}$. In each cluster, we assume that the random variables $\theta|Y_h$ are homogeneous and use them to construct a MAP prior. For $G^{(K)}_m$, the corresponding MAP random variable is denoted as $\theta|G^{(K)}_m$. For $G^{(K+1)}_m$ and $G^{(K+1)}_{K+1}$, they are presented as $\theta|G^{(K+1)}_m$ and $\theta|G^{(K+1)}_{K+1}$. Since the homogeneity of cluster $G^{(K+1)}_m$ and cluster $G^{(K+1)}_{K+1}$ are greater than that of cluster $G^{(K)}_m$, it follows that:
\begin{equation*}
    \begin{split}
        \sum^{n_m}_{h=1} \frac{N_{mh}}{N} \cdot OVL(\theta|G^{(K)}_m,\theta|Y_{mh})\leq & \sum^{n_m}_{h=1} \frac{N_{mh}}{N} \cdot OVL(\theta|G^{(K+1)}_m,\theta|Y_{mh})\cdot I(Y_{mh} \in G^{(K+1)}_m) \\
         + & \sum^{n_m}_{h=1} \frac{N_{mh}}{N} \cdot OVL(\theta|G^{(K+1)}_{K+1},\theta|Y_{mh})\cdot I(Y_{mh} \in G^{(K+1)}_{K+1}).
    \end{split}
\end{equation*}
By equation \eqref{eq:oei} in the definition of OEI, the inequality $OEI(\pi_K)\leq OEI(\pi_{K+1})$ holds. Note: This proof is applied to the rBCMAP as well.
\end{proof}

We can scale the sequence $OEI(\pi_1) \leq \ldots \leq OEI(\pi_H)$ by the maximum $OEI(\pi_H)$ and refer to $OEI(\pi_K)/OEI(\pi_H)$, $K=1,\ldots,H$, as $SOEI(\pi_K)$. The sequence $SOEI(\pi_K)=\{OEI(\pi_1)/OEI(\pi_H),\ldots,OEI(\pi_{H-1})/OEI(\pi_H),1\}$ denotes the percentage of evidence congruence under each $K$ relative to the extreme case where each distribution is a cluster. A threshold balancing the trade-off of maximizing evidence congruence and minimizing the number of clusters $K$ for robustness can then be used to determine the optimal $K$. As a rule of thumb, we recommend to choose the threshold at 60\% for selecting the optimal $K$, which indicates favoring congruence slightly more but not to the extent that robustness is significantly compromised. With this threshold, we can find the minimal $K^*$ that satisfies $OEI(\pi_{K^*})/OEI(\pi_H)\geq 60\%$. Then, $\pi_{K^*}(\theta|Y_1,\ldots,Y_H)$ is the finalized optimal Bayesian clustering prior. The main advantage of this threshold-based clustering approach is twofold: (1) it is straightforward and easy to interpret, as the threshold reflects a balance between the congruence of the synthesized prior with the external data and its robustness to future data, and (2) a fixed threshold offers a consistent and unified interpretation across various applications. These contrasts with Bayesian nonparametric clustering approaches, where the value of hyperparameters used to determine the number of clusters are often challenging to select and interpret and may lack consistency across different applications.

\begin{figure}[t!]
\centering
\includegraphics[width=.9\textwidth]{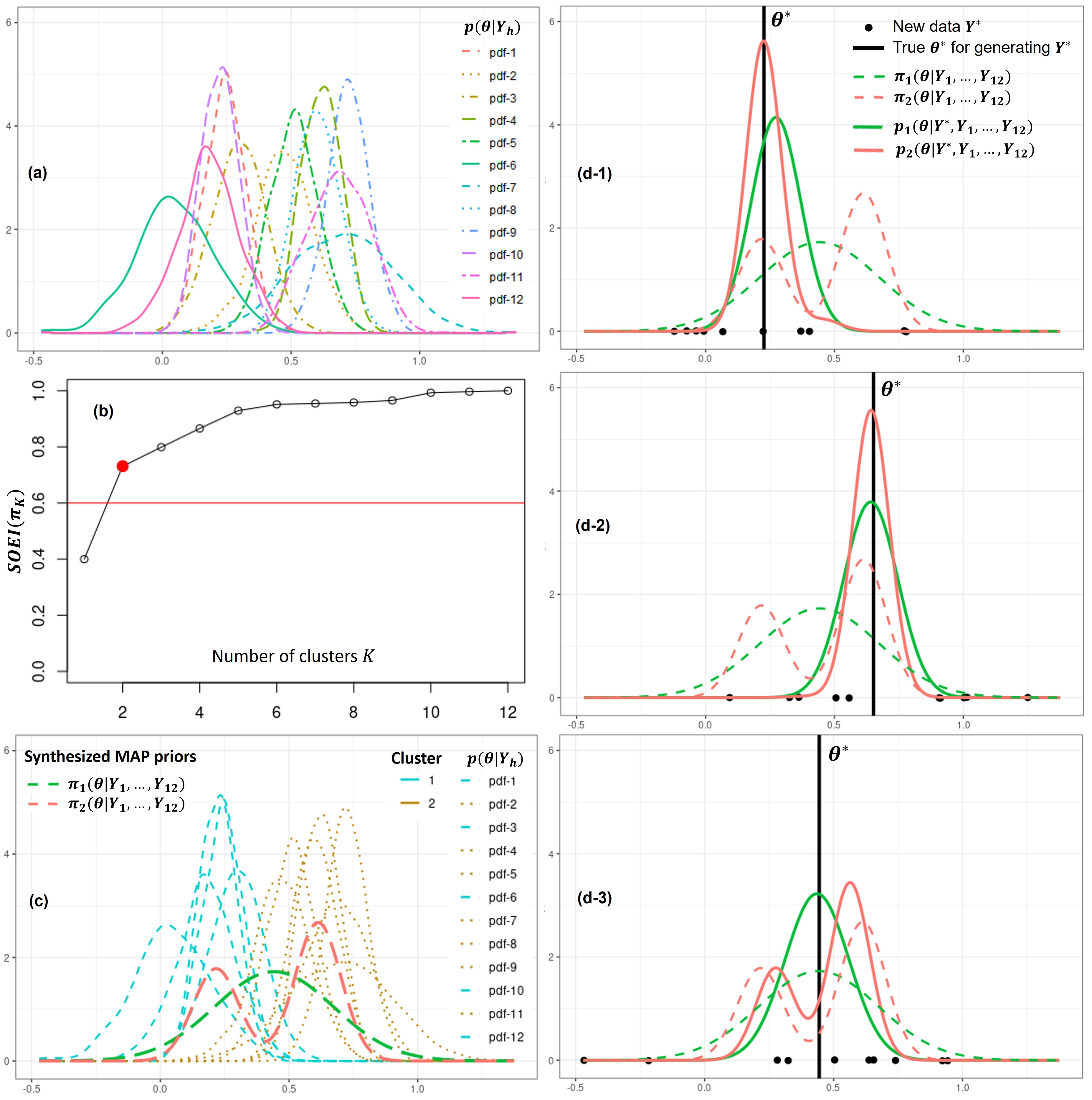}
\caption[bcp]{An example of BCMAP and comparison with MAP. Panel (a) shows the posteriors $p(\theta|Y_h)\propto L(Y_h|\theta)\pi_0(\theta)$ with weakly informative prior $\pi_0(\theta)$, $h=1,\ldots,12$. Panel (b) presents the identification of optimal number of clusters $K^*=2$. Panel (c) exhibits the clustering of external posteriors and corresponding BCMAP prior $\pi_2(\theta|Y_1,\ldots, Y_{12})$. MAP prior $\pi_1(\theta|Y_1,\ldots, Y_{12})$ is provided for comparison. Panels (d-1), (d-2), and (d-3) show the posteriors, $p_i(\theta|Y^*,Y_1,\ldots, Y_{12})\propto L(Y^*|\theta)\pi_i(\theta|Y_1,\ldots, Y_{12})$, $i=1,2$, with three sets of new data $Y^*$ generated from $N(\theta^*,0.3^2)$, where $\theta^*$ is sampled from model \eqref{eq:example}.}
\label{fig:mapexp}
\end{figure}

The following example provides an intuitive understanding of the questions: how is BCMAP constructed and how and why does it work? In the example, 12 external datasets were generated by the following model \eqref{eq:example}. The summary of these datasets is provided in Table \ref{tb:externalData} in Appendix. It is easy to find the heterogeneity among these datasets that contains two clusters centered at 0.2 and 0.6.
\begin{equation}\label{eq:example}
\begin{split}
& Y_1 | \theta,\sigma^2,\ldots, Y_{12} | \theta,\sigma^2 \sim N(\theta,\sigma^2), \hspace{3mm} \sigma \sim Halfnormal(1),\\
& \theta =\phi*(\mu_{c1}+\epsilon_{c1})+(1-\phi)*(\mu_{c2}+\epsilon_{c2}), \\
& \phi \sim Bernoulli(p=0.5), \hspace{3mm}  \mu_{c1}=0.2, \hspace{3mm} \mu_{c2}=0.6, \\
&  \epsilon_{c1} \sim N(0,\sigma_{c1}^2), \hspace{3mm} \epsilon_{c2} \sim N(0,\sigma_{c2}^2), \hspace{3mm} \sigma_{c1}=0.05, \hspace{3mm} \sigma_{c2}=0.08.
\end{split}
\end{equation}

The posteriors $p(\theta|Y_h)\propto L(Y_h|\theta)\pi_0(\theta)$, $h=1,\ldots,12$, are shown in Panel (a) of Figure \ref{fig:mapexp}. The weakly informative prior $\pi_0(\theta)$ is constructed as: $\pi_0(\theta) = \int \pi(\theta|\tau)\pi(\tau)\,d\tau$, where $\pi(\theta|\tau)\sim N(0.4,1/\tau)$ and $\pi(\tau)\sim gamma(0.01,10)$. In Panel (b), according to $SOEI(\pi_K)$ and the threshold 60\%, we identify the optimal number of clusters as $K^*=2$. The corresponding BCMAP prior, $\pi_2(\theta|Y_1,\ldots,Y_{12})$, is presented in Panel (c). For comparison, we also provide the MAP prior constructed from one cluster, $\pi_1(\theta|Y_1,\ldots,Y_{12})$. The BCMAP prior is congruent with external data, i.e., RWD. Therefore, if the new data $Y^*$ is consistent with the RWD too, which is reasonable, the estimation of $p(\theta|Y^*)$ will benefit from the information borrowing with the BCMAP prior. To illustrate this, three examples are shown in Panels (d-1), (d-2), and (d-3). Correspondingly, three different new datasets $Y^*$ (black dots) are generated from $N(\theta^*,0.3^2)$, where $\theta^*$ (black vertical line) is generated following the $\theta$ defined in \eqref{eq:example}, maintaining consistency with the external data. We compare the performance of BCMAP and MAP in estimating $\theta^*$.
In Panel (d-1), $\theta^*$ is near the center at 0.2. The posterior with BCMAP prior outperforms MAP prior in the estimations of both location (bias) and scale (variance). In Panel (d-2), $\theta^*$ is near the center at 0.6. The posteriors with both BCMAP and MAP priors offer accurate location estimation, but the BCMAP provides a better scale estimation. Note that BCMAP does not always perform better than MAP. In Panel (d-3), the posterior with the BCMAP prior shows two modes and a wider scale compared to the posterior with the MAP prior. The reason is that $\theta^*$ appears in the valley of the BCMAP prior but near the peak of the MAP prior, which is the opposite of the reason BCMAP outperforms MAP in Panels (d-1) and (d-2). The key point is that the likelihood of $\theta^*$ appearing near the two centers, 0.2 and 0.6, is much higher than it appearance near the valley according to RWD.

The example highlights the strengths and limitations of the proposed Bayesian clustering prior. In summary, the superiority of this approach depends on two key conditions, which are commonly met in practice.
\begin{itemize}
    \item[(1)] \textbf{Heterogeneity among external datasets:} The Bayesian clustering MAP (BCMAP) prior is particularly advantageous when there is heterogeneity among the external datasets. In the absence of such heterogeneity, the BCMAP reduces to the standard MAP prior (see Section 5.1).
    \item[(2)] \textbf{Consistency between future data and external data:} For the BCMAP to perform well, the future (new) data $Y^*$ must be consistent with the external data. Specifically, the data generating process for $Y^*$ should align with that of the external datasets, even if it is not identical. For example, $Y^*$ may follow a distribution that corresponds to one of the components in the mixture distribution of the external data. If this condition is not met, the BCMAP may underperform compared to the MAP prior; however, the rBCMAP can provide a partial remedy by incorporating robustness into the prior construction (see Section 5.1).
\end{itemize}
\section{Simulation Studies}\label{sec:sim}
In this section, we conduct comprehensive simulation studies to demonstrate the superiority of BCMAP (rBCMAP) in both parameter estimation and hypothesis testing compared to commonly used priors, such as MAP, rMAP, NPP (Normalized Power Prior), and MEM. For both rBCMAP and rMAP, the weight for robustness (weight of the component of weakly informative prior) is 0.5.

\subsection{Parameter Estimation}

\begin{figure}[t!]
\centering
\includegraphics[width=.72\textwidth]{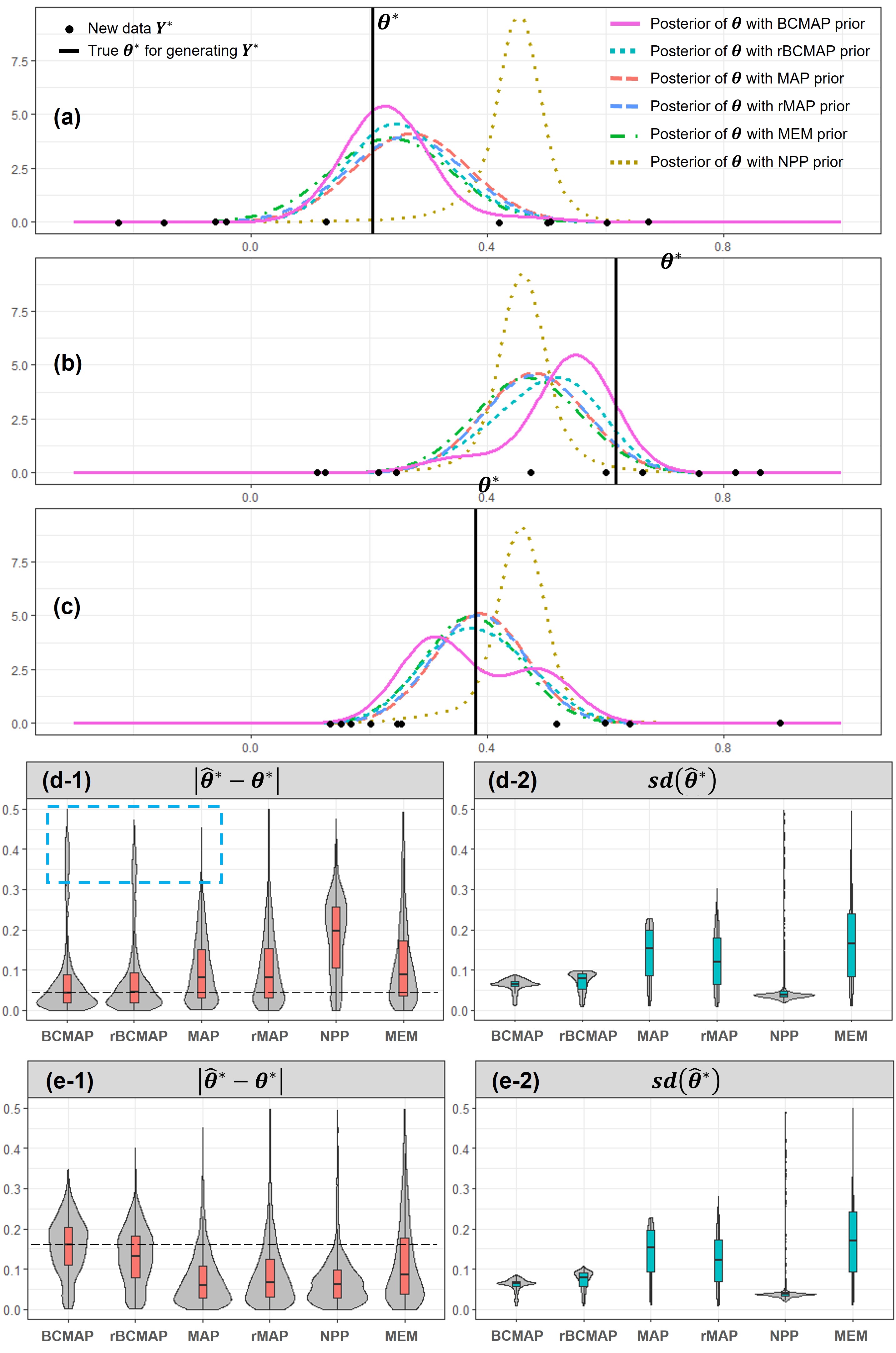}
\caption[bcp]{Estimation of new $\theta^*$ from posteriors computed with different priors. In Panels (a) and (b), $\theta^*$ is located near the centers 0.2 and 0.6, respectively, while in Panel (c), $\theta^*$ lies between the two centers. Panels (d-1, d-2) present the bias and standard deviation of the estimate $\hat{\theta}^*$ when $\theta^*$ is near the centers 0.2 or 0.6. Panels (e-1, e-2) display the bias and standard deviation when $\theta^*$ is situated between the two centers. Both (d-1, d-2) and (e-1, e-2) are based on 1000 simulations, with each simulation containing 10 observations of $Y^*$.}
\label{fig:simExamples}
\end{figure}

First, let us gain an intuitive understanding of the simulation using the 12 external datasets shown in Panel (a) of Figure \ref{fig:mapexp}. In each simulation, 10 observations are generated from $N(\theta^*, 0.3^2)$, where $\theta^*$  is drawn from the model \eqref{eq:example}. The posteriors are then calculated based on $Y^*$ using different priors.
Intuitive comparison can be conducted by considering the posterior estimation of the location (bias) and scale (standard deviation) of estimate $\hat{\theta}^*$. 
Let us examine three examples shown in panels (a, b, c) of Figure \ref{fig:simExamples}. In panels (a) and (b), $\theta^*$ is close to the centers 0.2 or 0.6, indicating the consistency of the new data $Y^*$ with the external data. BCMAP and rBCMAP outperform other methods with less bias and lower variance of the estimate $\hat{\theta}^*$. However, panel (c) exhibits the opposite scenario where $\theta^*$ is located in the middle of the two centers, indicating the inconsistency between the new data $Y^*$ and the external data. This results in a significantly worse performance of BCMAP compared to the other methods. Such scenarios as in panel (c) are rare under the assumption that the new data $Y^*$ is consistent with the external data, i.e., RWD. A valuable observation in panel (c) is that rBCMAP can enhance the robustness of BCMAP when the assumption of consistency is violated. Furthermore, we repeated the simulation 1000 times for scenarios that either adhere to or violate the assumption of consistency. The result of the scenarios adhering to the assumption is shown in panels (d-1,2). Panel (d-1) presents the empirical distribution of bias, $|\hat{\theta^*}-\theta^*|$, where $\hat{\theta^*}$ is the mode of the posterior. BCMAP and rBCMAP have less bias compared to other methods. Panel (d-2) shows the empirical distribution of standard deviation, $sd(\hat{\theta}^*)$. Although NPP has the lowest standard error, it has the highest bias in Panel (d-1). Excluding NPP, BCMAP and rBCMAP perform better than other methods. In inconsistency scenarios, BCMAP exhibits the highest bias compared to other methods, as shown in Panel (e-1). With a robustness weight of 0.5, rBCMAP reduces the estimation bias, which justifies the observation in Panel (c). The summary of the estimation results based on the root mean square error (RMSE) criterion for both the consistency and inconsistency scenarios is presented in Table \ref{tb:fig3}. An unexpected observation is that the BCMAP exhibits a higher RMSE (0.141) compared to MAP (0.129) in the consistency scenario, which appears to contradict the results shown in Figure \ref{fig:simExamples}. This discrepancy is due to the influence of outliers, as highlighted in Panel (d-1) of Figure \ref{fig:simExamples} (indicated by the blue dashed rectangle). After removing the top 5\% of data (under bias measure), the BCMAP's RMSE (0.109) becomes smaller than that of MAP (0.113), aligning with the results shown in Panel (d-1). Another noteworthy observation is that the rBCMAP effectively mitigates the RMSE inflation observed with BCMAP in both scenarios.

\begin{table}[t!]
\caption{Comparison of RMSE under the scenarios of consistency and inconsistency shown in Panel (d-1), (d-2) and (e-1), (e-2) in Figure \ref{fig:simExamples}. ``Remove 5\% outliers'' means that calculates RMSE after removing the top 5\% of data (under bias measure).}
\centering
\small
\begin{tabular}{|c|c|c|c|c|c|c|c|}
\hline
\multirow{2}{*}{Scenarios} & \multirow{2}{*}{ Data} & \multicolumn{6}{c|}{Root mean square error (RMSE)}                                                                                                                      \\ \cline{3-8} 
                             &                       & \multicolumn{1}{c|}{BCMAP}  & \multicolumn{1}{c|}{MAP}    & \multicolumn{1}{c|}{rBCMAP} & \multicolumn{1}{c|}{rMAP}   & \multicolumn{1}{c|}{NPP}    & MEM    \\ \hline
\multirow{2}{*}{Consistency}   & All                       & 0.141 & 0.129 & 0.131 & 0.150 & 0.285 & 0.209 \\ \cline{2-8} 
                              & Remove 5\% outliers & 0.109 & 0.113 & 0.103 & 0.119 & 0.207 & 0.149 \\ \hline \hline
\multirow{2}{*}{Inconsistency} & All                       & 0.170 & 0.107 & 0.149 & 0.133 & 0.242 & 0.238 \\ \cline{2-8} 
                              & Remove 5\% outliers & 0.162 & 0.085 & 0.139 & 0.099 & 0.098 & 0.157 \\ \hline 
\end{tabular}
\label{tb:fig3}
\end{table}

The simulation illustrated in Figure \ref{fig:simExamples} considers only a single heterogeneity scenario (two clusters) in the external datasets. To enable a more comprehensive comparison, we extend the study in two aspects: (1) in addition to the two-cluster scenario, we examine cases with one and three clusters; (2) we evaluate the effect of varying the size of $Y^*$, considering sizes of 5, 10, 15, 20, 25, and 30. In the one-cluster scenario, 10 external datasets are generated from the model \eqref{eq:example1}, representing homogeneity among the datasets. While the three-clusters scenario includes 25 external datasets generated from the model \eqref{eq:example3}, reflecting different heterogeneity from the two-clusters scenario. A summary of the three scenarios, including the number of clusters, number of datasets, and sample sizes within each dataset, is provided in Table \ref{tb:simParaEst} in the Appendix. The posteriors (based on a weakly informative prior), clustering results, and synthesized BCMAP and rBCMAP priors for the one-cluster and three-cluster scenarios are presented in Figures \ref{fig:simOneCluster} and \ref{fig:simThreeClusters}, respectively, in the Appendix.

\begin{equation}\label{eq:example1}
\begin{split}
& Y_1 | \theta,\sigma^2,\ldots, Y_{10} | \theta,\sigma^2 \sim N(\theta,\sigma^2), \hspace{1mm} \sigma \sim Halfnormal(1), \\
& \theta =\mu+\epsilon, \hspace{1mm} \mu=0.6, \hspace{1mm} \epsilon \sim N(0,0.12^2).
\end{split}
\end{equation}

\begin{equation}\label{eq:example3}
\begin{split}
& Y_1 | \theta,\sigma^2,\ldots, Y_{25} | \theta,\sigma^2 \sim N(\theta,\sigma^2), \hspace{3mm} \sigma \sim Halfnormal(1), \hspace{3mm} \theta =\phi*(\mu_{c1}+\epsilon_{c1}), \\
& \theta =I(c=1)*(\mu_{c1}+\epsilon_{c1})+I(c=2)*(\mu_{c2}+\epsilon_{c2})+I(c=3)*(\mu_{c3}+\epsilon_{c3}), \\
& c \sim multinomial(p_{c1}=p_{c2}=0.3,p_{c3}=0.4), \hspace{3mm}  \mu_{c1}=0.2, \hspace{3mm} \mu_{c2}=0.6, \hspace{3mm} \mu_{c3}=1, \\
&  \epsilon_{c1} \sim N(0,0.12^2), \hspace{3mm} \epsilon_{c2} \sim N(0,0.1^2), \hspace{3mm} \epsilon_{c3} \sim N(0,0.15^2).
\end{split}
\end{equation}

\begin{figure}[t!]
\centering
\includegraphics[width=\textwidth]{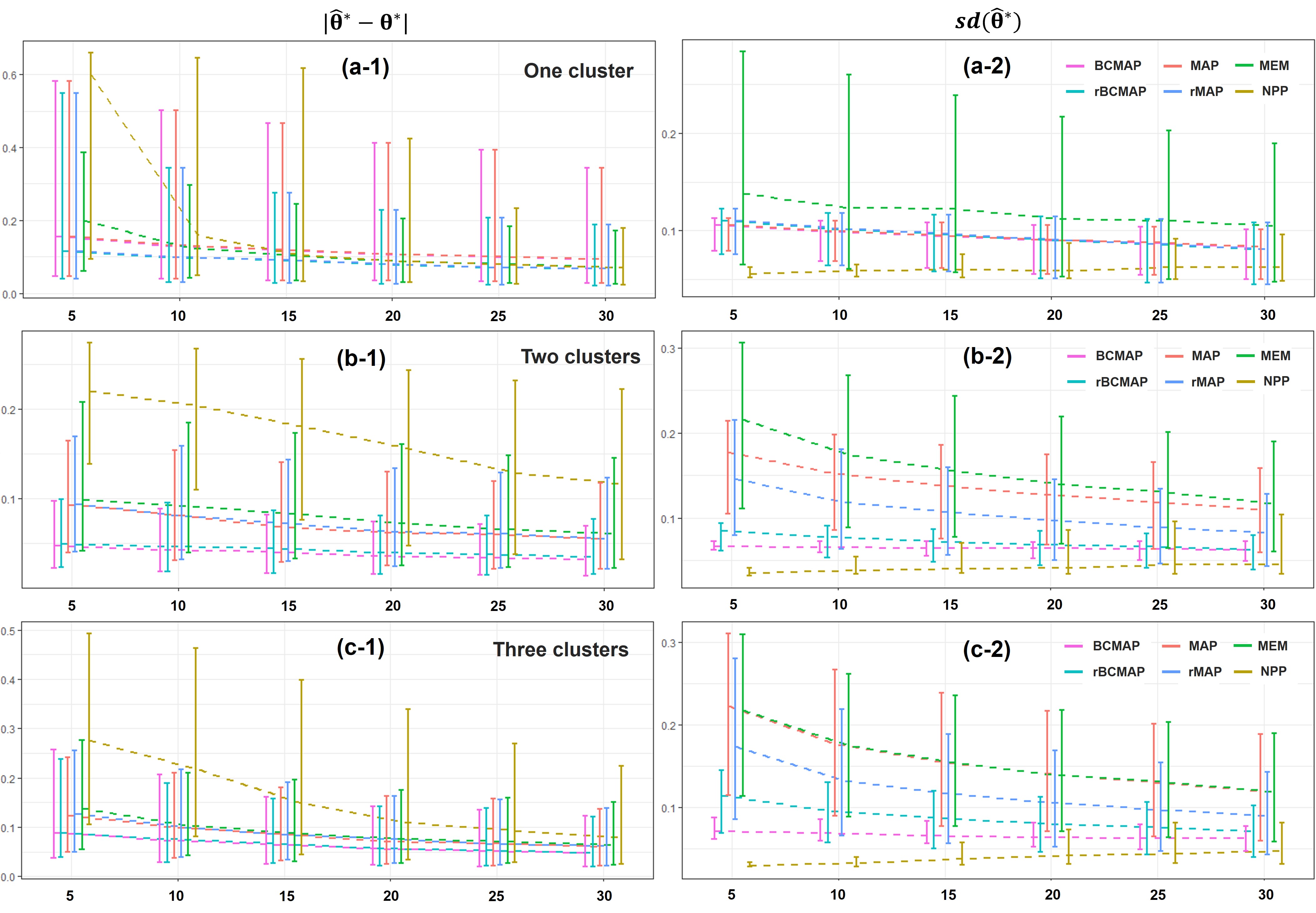}
\caption[bcp]{Comparison of parameter estimation under different size of observations. Panels (a) and (b) present the bias and standard deviation of estimate $\hat{\theta}^*$ with different numbers of observations $Y^*$, when $\theta^*$ is near the centers 0.2 or 0.6.}
\label{fig:sim}
\end{figure}

The simulation results are presented in Figure \ref{fig:sim}. The panels in the left and right columns illustrate the comparison of bias $|\hat{\theta^*}-\theta^*|$ and variance $sd(\hat{\theta^*})$, respectively. In the one-cluster scenario, the MAP and rMAP priors are identical to the BCMAP and rBCMAP priors, respectively, as shown in Panels (a-1) and (a-2). In the two- and three-clusters scenarios, when the new data $Y^*$ is consistent with the external data, the BCMAP and rBCMAP priors outperform other methods.
Notably, the NPP exhibits poor performance, characterized by the smallest $sd(\hat{\theta^*})$ but the largest bias. An interesting observation is that as the size of $Y^*$ increases, all methods tend to converge to similar results because the new data begins to dominate the prior. The summary of RMSE for the two clusters and three clusters scenarios (excluding the one cluster scenario, as BCMAP and rBCMAP are equivalent to MAP and rMAP in this case) is presented in Table \ref{tb:2clusters} and Table \ref{tb:3clusters}, respectively. As with Table \ref{tb:fig3}, we report the models' RMSE values both with and without outliers.

In summary, if the new data $Y^*$ is consistent with external data that exhibits heterogeneity, BCMAP and rBCMAP priors provide more accurate parameter estimation, with lower bias and variance, than commonly used methods. When the new data $Y^*$ is not consistent with the external data, both BCMAP and rBCMAP may underperform compared to other methods. However, rBCMAP provides a partial remedy, as demonstrated in Panel (e-1) of Figure \ref{fig:simExamples} and Tables \ref{tb:fig3}, \ref{tb:2clusters}, \ref{tb:3clusters}.

\subsection{Hypothesis Testing}
In this section, we utilize the external datasets with two clusters shown in Figure \ref{fig:mapexp}.
To compare the performance of different priors in hypothesis testing, we design a trial with two groups: control and treatment. Observations are generated from $N(\theta_c, \sigma^2)$ and $N(\theta_t, \sigma^2)$, where $\sigma=0.3$ and $\theta_t=0.4$. Since the possible values of $\theta_c$ are centered at 0.2 and 0.6 by the model \eqref{eq:example}, we consider two scenarios: $\theta_{c1}=0.2$ and $\theta_{c2}=0.6$. Correspondingly, we are interested in two hypothesis tests:
\begin{itemize}
\centering
    \item[(1)] $H_0: \theta_t \leq \theta_{c1} , \hspace{3mm} H_1: \theta_t > \theta_{c1}$
    \item[(2)] $H_0: \theta_t \geq \theta_{c2} , \hspace{3mm} H_1: \theta_t < \theta_{c2}$.
\end{itemize}

In the simulation, we investigate both frequentist and Bayesian methods. The frequentist method is a one-sided two-sample t-test. We considered two control vs. treatment recruitment ratios, $10:30$ and $30:30$. For Bayesian methods, to study the effect of information borrowing from synthesized priors, we use the same data with control vs. treatment ratio $10:30$. Corresponding to the two hypothesis tests, the decision rule (reject $H_0$) and operational characteristics in Bayesian methods are as follows:
\begin{itemize}
    \item[(1)] Decision rule:    $Pr(\theta_t>\theta_{c1}|Y_t^*,Y_{c1}^*,Y_1,\ldots,Y_{12})>\eta$; \\ Type1 error rate: $Pr(Pr(\theta_t>\theta_{c1}|Y_t^*,Y_{c1}^*,Y_1,\ldots,Y_{12})>\eta| H_0)$;\\
    Power:        $Pr(Pr(\theta_t>\theta_{c1}|Y_t^*,Y_{c1}^*,Y_1,\ldots,Y_{12})>\eta| H_1)$.
    \item[(2)] Decision rule: $Pr(\theta_t<\theta_{c2}|Y_t^*,Y_{c2}^*,Y_1,\ldots,Y_{12})>\eta$; \\ Type1 error rate: $Pr(Pr(\theta_t<\theta_{c2}|Y_t^*,Y_{c2}^*,Y_1,\ldots,Y_{12})>\eta| H_0)$;\\
    Power:        $Pr(Pr(\theta_t<\theta_{c2}|Y_t^*,Y_{c2}^*,Y_1,\ldots,Y_{12})>\eta| H_1)$,
\end{itemize}
where $\eta \in [0.5,1)$ is an adjustable threshold used to control the type 1 error rate under 5\%. In the calculation of $Pr(\theta_t>\theta_{c1}|Y_t^*,Y_{c1}^*,Y_1,\ldots,Y_{12})$ or $Pr(\theta_t<\theta_{c2}|Y_t^*,Y_{c2}^*,Y_1,\ldots,Y_{12})$, we need to find the joint posterior $p(\theta_t, \theta_c |Y_t^*,Y_c^*,Y_1,\ldots,Y_{12})$, where $\theta_c=\{\theta_{c1},\theta_{c2}\}$ and $Y_c^*=\{Y_{c1}^*,Y_{c2}^*\}$. Since $\theta_t$ and $\theta_c$ are independent, the joint posterior can be expressed as follows:
\begin{equation}\label{eq:hpt}
        p(\theta_t, \theta_c |Y_t^*,Y_c^*,Y_1,\ldots,Y_{12}) \propto L(Y_t^*|\theta_t)\pi_0(\theta_t) L(Y_c^*|\theta_c)\pi(\theta_c|Y_1,\ldots,Y_{12})
\end{equation}
where $\pi_0(\theta_t)$ is a weakly informative prior indicating trivial prior knowledge about the treatment group, and $\pi(\theta_c|Y_1,\ldots,Y_{12})$ denotes the synthesized priors from the 12 external datasets.

\begin{table}[t!]
\centering
\small
\caption{Comparison of operation characteristic in hypothesis testing. Two hypothesis tests: (1) $H_0: \theta_t \leq \theta_{c1} , \hspace{3mm} H_1: \theta_t > \theta_{c1}$ and (2) $H_0: \theta_t \geq \theta_{c2} , \hspace{3mm} H_1: \theta_t < \theta_{c2}$. For the Bayesian methods, the control vs. treatment ratio is $10:30$.}
\begin{tabular}{|ccccc|}
\hline
\multicolumn{1}{|c|}{\multirow{2}{*}{\textbf{Methods}}} & \multicolumn{2}{c|}{\textbf{$\theta_{c1}=0.2$}}             & \multicolumn{2}{c|}{\textbf{$\theta_{c2}=0.6$}} \\ \cline{2-5} 
\multicolumn{1}{|c|}{}                                  & \textbf{Type 1 Error} & \multicolumn{1}{c|}{\textbf{Power}} & \textbf{Type 1 Error}      & \textbf{Power}     \\ \hline \hline
\multicolumn{5}{|c|}{\textbf{Control : Treatment = 30 : 30}}                                                                                                            \\ \hline
\multicolumn{1}{|c|}{\textbf{Frequentist}}              & 0.044                 & \multicolumn{1}{c|}{0.810}          & 0.055                      & 0.817              \\ \hline \hline
\multicolumn{5}{|c|}{\textbf{Control : Treatment = 10 : 30}}                                                                                                            \\ \hline
\multicolumn{1}{|c|}{\textbf{Frequentist}}              & 0.055                 & \multicolumn{1}{c|}{0.559}          & 0.053                      & 0.550              \\ \hline
\multicolumn{1}{|c|}{\textbf{MEM}}                      & 0.051                 & \multicolumn{1}{c|}{0.601}          & 0.052                      & 0.581              \\
\multicolumn{1}{|c|}{\textbf{NPP}}                      & 0.001                 & \multicolumn{1}{c|}{0.275}          & 0.010                      & 0.886              \\
\multicolumn{1}{|c|}{\textbf{MAP}}                      & 0.052                 & \multicolumn{1}{c|}{0.631}          & 0.054                      & 0.629              \\
\multicolumn{1}{|c|}{\textbf{rMAP}}                     & 0.051                 & \multicolumn{1}{c|}{0.607}          & 0.055                      & 0.605              \\
\multicolumn{1}{|c|}{\textbf{BCMAP}}                    & \textbf{0.049}        & \multicolumn{1}{c|}{\textbf{0.797}} & \textbf{0.052}             & \textbf{0.781}     \\
\multicolumn{1}{|c|}{\textbf{rBCMAP}}                   & 0.050                 & \multicolumn{1}{c|}{0.693}          & 0.051                      & 0.709              \\ \hline
\end{tabular}
\label{tb:sim}
\end{table}

The simulation results are shown in Table \ref{tb:sim}. For the frequentist method, the results of both hypothesis tests (1) and (2) show a dramatic reduction in power as the size of the control group decreases from 30 to 10. Regarding Bayesian methods, except for the case of NPP in hypothesis test (1), information borrowing from synthesized priors can improve the test power compared to the frequentist 10:30 trial, which does not involve information borrowing. However, the simulation results also illustrate that the quality of the prior has a significant impact. NPP performs best in test (2) but worst in test (1). MEM and rMAP only improve the power in small magnitudes (less than 5\%). MAP contributes a considerable power increase in both tests (greater than 7\%), but still smaller than rBCMAP and BCMAP. Overall, BCMAP performs the best (increasing power more than 20\%), even comparable to the frequentist 30:30 trial. In sum, Bayesian cluster priors can provide accurate information about external data, especially when the external data exhibits heterogeneity. When the new data $Y^*_c$ is generally consistent with the external data, borrowing information from Bayesian cluster priors can effectively improve the operational characteristics of hypothesis testing.

\section{Real Data Analysis}
\label{sec:real_data}
Postoperative nausea and vomiting are common complications following surgery and anesthesia. As an alternative to drug therapy, acupuncture has been studied as a potential treatment in several trials \citep{Lee04}. The dataset ``dat.lee2004" in the R package ``metadat" \citep{metadat2022} contains the results from 16 clinical trials examining the effectiveness of wrist acupuncture point P6 treatment for preventing postoperative nausea. Patient level (covariate) information is not available. The detailed description of the dataset can be found in Table \ref{tb:real} in the Appendix.

Based on this dataset, we aim to construct an informative prior that contains accurate real-world information about the wrist acupuncture point P6 treatment. To do this, we focus on the treatment groups listed in Table \ref{tb:real}. The columns ``ai'' and ``n1i'' correspond to $r_h$ and $N_h$, respectively, where $r_h \sim Binomial(N_h, p)$ for $h=1,\ldots,16$, with $p$ representing the probability of a patient experiencing nausea. Let $Y_h=\{r_h, N_h\}$ denote the data of study $h$. Our goal is to construct an informative prior $\pi(p|Y_1,\ldots,Y_{16})$. First, we examine the posteriors of the external data using a vague conjugate prior $\pi(p)=Beta(0.5,0.5)$. The posteriors can be straightforwardly obtained: $p | Y_h \sim Beta(0.5+y_{h}, 0.5+N_h-y_h), \hspace{3mm} h=1,\ldots,16$, as shown in Panel (b) of Figure \ref{fig:realData}. Clearly, there is substantial heterogeneity among the different studies, making it a suitable case for leveraging the Bayesian clustering prior.
From Panel (a), we identified the optimal number of clusters as $K^*=3$. The clusters and the corresponding BCMAP prior are presented in Panel (b), where we also include the MAP prior for visual comparison. The parameterized MAP and BCMAP are listed as follows:
\begin{equation}\label{eq:realMAP}
\pi_{MAP}(p|Y_1,\ldots,Y_{16}) =\pi_1(p|Y_1,\ldots,Y_{16}) =Beta(1.7,4.0) \hspace{21mm}
\end{equation}
\begin{equation}\label{eq:realBCMAP}
    \begin{split}
        & \pi_{BCMAP}(p|Y_1,\ldots,Y_{16}) = \pi_3(p|Y_1,\ldots,Y_{16}) \\
        &=0.18*Beta(3.7,43.2)+0.47*Beta(11.2,43.2)+0.35*Beta(7.3,8.1).
    \end{split}
\end{equation}

\begin{figure}[t!]
\centering
\includegraphics[width=\textwidth]{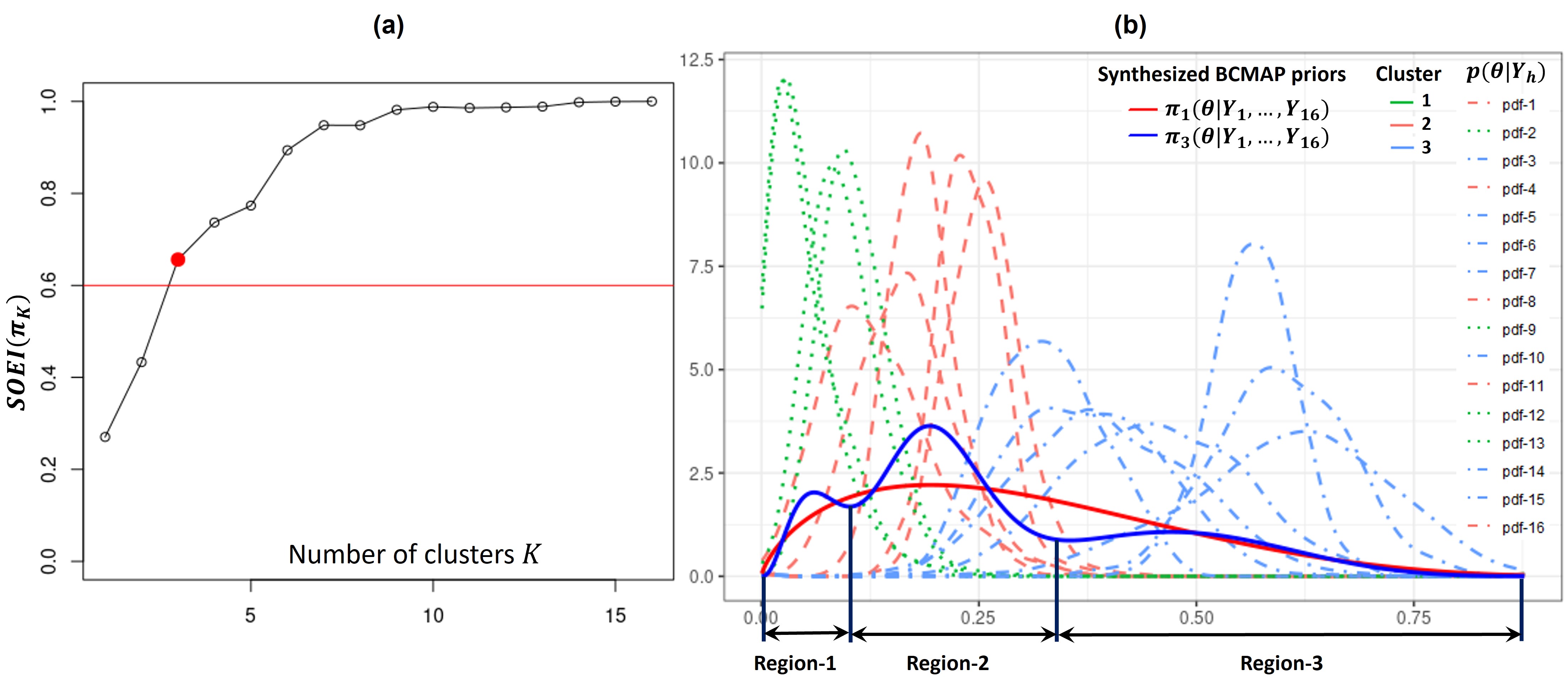}
\caption[bcp]{Bayesian clustering prior constructed on the wrist acupuncture point P6 treatment data from 16 studies. Panel (a) shows the selection of optimal number of clusters $K^*=3$. Panel (b) presents the clustering results and corresponding parameterized BCMAP prior $\pi_3(p|Y_1,\ldots,Y_{16})$. As a comparison, the parameterized MAP prior $\pi_1(p|Y_1,\ldots,Y_{16})$ is also shown in Panel (b).}
\label{fig:realData}
\end{figure}

Since the MEM and NPP need to compare the similarity between external data and new data, they cannot be explicitly constructed without knowing the new data. In contrast, MAP (rMAP) and BCMAP (rBCMAP) are independent of new data. Compared to the MAP prior, the BCMAP prior has the advantage that its effective sample size (ESS) can be estimated in a finer way. From equation \eqref{eq:realMAP}, we know that the ESS of $\pi_{MAP}(p|Y_1,\ldots,Y_{16})$ over the entire region [0,1] is 1.7+4.0=5.7. Let us check the ESS of $\pi_{BCMAP}(p|Y_1,\ldots,Y_{16})$ from equation \eqref{eq:realBCMAP} and Panel (b): in region 1, $ESS_1 \approx 0.18*(3.7+43.2)=8.4$; in region 2, $ESS_2 \approx 0.47*(11.2+43.2)=25.6$; in region 3, $ESS_3 \approx0.35*(7.3+8.1)=5.4$. This provides valuable information for the design of new trials. 
In the data analysis stage, the amount of information borrowed from $\pi_{BCMAP}(p|Y_1,\ldots,Y_{16})$ varies across different regions determined by the clustering result. For example, the borrowed information will be significantly stronger when $\hat{p}=r^*/N^*$ falls into region-2 compared to when $\hat{p}$ falls into region-1 or region-3. This enhances the congruence of the prior with the evidence, especially in the presence of heterogeneous external data. Figure \ref{fig:realData2} in Appendix illustrates the scenario with six clusters, where $\pi_6$ is another option of the informative prior.

\section{Discussion}\label{sec:Con}
Clustering is essential for synthesizing an informative prior from heterogeneous multisource external data. Leveraging the concept of the overlapping coefficient, we introduced the OCI, OEI, and a K-Means algorithm to identify the optimal clustering based on a criterion that balances the trade-off between evidence congruence and the robustness of the synthesized prior. Using the clustering results, a Bayesian clustering prior can be constructed and applied during both the trial design and data analysis stages. Simulation studies validate its superiority over commonly used priors when heterogeneity exists among data sources and the future (new) data is consistent with the external data.

In Section \ref{sec:sec4}, we recommended using a 60\% threshold, which is a subjective suggestion. Researchers can adjust this threshold based on their perspective on the trade-off between evidence congruence and robustness. Those who prioritize robustness might opt for a 50\% threshold, while those who prioritize evidence congruence might choose a threshold of 70\%. Basically, the choice of the threshold reflects the prior knowledge about the future and external data and the preference of the congruence and robustness trade-off.

We do not include covariate information in our study. However, it is becoming increasingly important in the era of precision medicine. The possible ways to incorporate covariate information into the Bayesian cluster prior is an interesting direction for future study.


\newpage

\begin{appendix}
\renewcommand\thefigure{\thesection.\arabic{figure}}
\renewcommand\thetable{\thesection.\arabic{table}}

\section{Appendix}\label{sec:apdxA}

\begin{table}[ht]
\caption{Summary of 12 external datasets (two clusters).}
\small
\centering
\begin{tabular}{ccccc}
\hline
Dataset ($Y_h$) & Size ($n_h$) & Cluster center ($\mu_c$) & Mean ($\theta_h$)  & Standard deviation ($\sigma_h$) \\ \hline
    $Y_1$    & 88   & 0.2            & 0.220 & 0.560              \\
    $Y_2$    & 50   & 0.6            & 0.465 & 0.230              \\
    $Y_3$    & 56   & 0.2            & 0.172 & 1.559              \\
    $Y_4$    & 99   & 0.6            & 0.612 & 0.071              \\
    $Y_5$    & 69   & 0.6            & 0.509 & 0.129              \\
    $Y_6$    & 25   & 0.2            & 0.102 & 1.715              \\
    $Y_7$    & 15   & 0.6            & 0.634 & 0.461              \\
    $Y_8$    & 81   & 0.6            & 0.576 & 1.265              \\
    $Y_9$    & 95   & 0.6            & 0.672 & 0.687              \\
    $Y_{10}$    & 95   & 0.2            & 0.189 & 0.446              \\
    $Y_{11}$    & 48   & 0.6            & 0.666 & 1.224              \\
    $Y_{12}$    & 40   & 0.2            & 0.164 & 0.360              \\ \hline
\end{tabular}
\label{tb:externalData}
\end{table}

\begin{table}[ht]
\caption{Comparison of RMSE under two clusters scenario with different number of $Y^*$. ``Remove 5\% outliers'' means that calculates RMSE after removing the top 5\% of data (under bias measure).}
\small
\begin{tabular}{|c|c|c|c|c|c|c|c|}
\hline
\multirow{2}{*}{Number of $Y^*$} & \multirow{2}{*}{ Data } & \multicolumn{6}{c|}{Root mean square error (RMSE)}                                                                                                                                     \\ \cline{3-8} 
                             &                       & \multicolumn{1}{c|}{BCMAP}  & \multicolumn{1}{c|}{MAP}    & \multicolumn{1}{c|}{rBCMAP} & \multicolumn{1}{c|}{rMAP}   & \multicolumn{1}{c|}{NPP}    & MEM    \\ \hline
\multirow{2}{*}{5}  & All & 0.150 & 0.137 & 0.135 & 0.168 & 0.361 & 0.241 \\ \cline{2-8} 
                    & Remove 5\% outliers  & 0.122 & 0.122 & 0.107 & 0.127 & 0.220 & 0.161 \\ \hline
\multirow{2}{*}{10} & All & 0.144 & 0.131 & 0.135 & 0.151 & 0.282 & 0.209 \\ \cline{2-8} 
                    & Remove 5\% outliers  & 0.112 & 0.114 & 0.107 & 0.119 & 0.207 & 0.149 \\ \hline
\multirow{2}{*}{15} & All & 0.138 & 0.123 & 0.130 & 0.137 & 0.238 & 0.185 \\ \cline{2-8} 
                    & Remove 5\% outliers  & 0.105 & 0.106 & 0.099 & 0.112 & 0.192 & 0.142 \\ \hline
\multirow{2}{*}{20} & All & 0.129 & 0.116 & 0.120 & 0.128 & 0.223 & 0.174 \\ \cline{2-8} 
                    & Remove 5\% outliers  & 0.094 & 0.098 & 0.087 & 0.103 & 0.178 & 0.130 \\ \hline
\multirow{2}{*}{25} & All & 0.123 & 0.112 & 0.114 & 0.123 & 0.208 & 0.164 \\ \cline{2-8} 
                    & Remove 5\% outliers  & 0.086 & 0.093 & 0.080 & 0.099 & 0.166 & 0.121 \\ \hline
\multirow{2}{*}{30} & All & 0.117 & 0.107 & 0.108 & 0.117 & 0.195 & 0.155 \\ \cline{2-8} 
                    & Remove 5\% outliers  & 0.079 & 0.090 & 0.074 & 0.095 & 0.157 & 0.115 \\ \hline
\end{tabular}
\label{tb:2clusters}
\end{table}

\begin{table}[ht]
\caption{Comparison of RMSE under three clusters scenario with different number of $Y^*$. ``Remove 5\% outliers'' means that calculates RMSE after removing the top 5\% of data (under bias measure). ``Remove 15\% outliers'' means that calculates RMSE after removing the top 15\% of data (under bias measure).}
\centering
\small
\begin{tabular}{|c|c|cccccc|}
\hline
\multirow{2}{*}{Number of $Y^*$} & \multirow{2}{*}{Data}                    & \multicolumn{6}{c|}{Root mean square error (RMSE)}                                                                                                                                \\ \cline{3-8} 
                                                &                                          & \multicolumn{1}{c|}{BCMAP} & \multicolumn{1}{c|}{MAP}   & \multicolumn{1}{c|}{rBCMAP} & \multicolumn{1}{c|}{rMAP}  & \multicolumn{1}{c|}{NPP}   & MEM   \\ \hline
\multirow{3}{*}{5}                              & All                                      & \multicolumn{1}{c|}{0.228} & \multicolumn{1}{c|}{0.214} & \multicolumn{1}{c|}{0.218}  & \multicolumn{1}{c|}{0.250} & \multicolumn{1}{c|}{0.438} & 0.287 \\ \cline{2-8} 
                                                & \multicolumn{1}{l|}{Remove 5\% outliers} & \multicolumn{1}{c|}{0.192} & \multicolumn{1}{c|}{0.181} & \multicolumn{1}{c|}{0.184}  & \multicolumn{1}{c|}{0.192} & \multicolumn{1}{c|}{0.349} & 0.206 \\ \cline{2-8} 
                                                & Remove 15\% outliers                     & \multicolumn{1}{c|}{0.142} & \multicolumn{1}{c|}{0.144} & \multicolumn{1}{c|}{0.136}  & \multicolumn{1}{c|}{0.151} & \multicolumn{1}{c|}{0.305} & 0.161 \\ \hline
\multirow{3}{*}{10}                             & All                                      & \multicolumn{1}{c|}{0.207} & \multicolumn{1}{c|}{0.191} & \multicolumn{1}{c|}{0.197}  & \multicolumn{1}{c|}{0.211} & \multicolumn{1}{c|}{0.368} & 0.233 \\ \cline{2-8} 
                                                & \multicolumn{1}{l|}{Remove 5\% outliers} & \multicolumn{1}{c|}{0.170} & \multicolumn{1}{c|}{0.161} & \multicolumn{1}{c|}{0.161}  & \multicolumn{1}{c|}{0.169} & \multicolumn{1}{c|}{0.319} & 0.173 \\ \cline{2-8} 
                                                & Remove 15\% outliers                     & \multicolumn{1}{c|}{0.118} & \multicolumn{1}{c|}{0.123} & \multicolumn{1}{c|}{0.111}  & \multicolumn{1}{c|}{0.128} & \multicolumn{1}{c|}{0.274} & 0.126 \\ \hline
\multirow{3}{*}{15}                             & All                                      & \multicolumn{1}{c|}{0.184} & \multicolumn{1}{c|}{0.170} & \multicolumn{1}{c|}{0.174}  & \multicolumn{1}{c|}{0.184} & \multicolumn{1}{c|}{0.309} & 0.204 \\ \cline{2-8} 
                                                & \multicolumn{1}{l|}{Remove 5\% outliers} & \multicolumn{1}{c|}{0.146} & \multicolumn{1}{c|}{0.141} & \multicolumn{1}{c|}{0.138}  & \multicolumn{1}{c|}{0.147} & \multicolumn{1}{c|}{0.269} & 0.160 \\ \cline{2-8} 
                                                & Remove 15\% outliers                     & \multicolumn{1}{c|}{0.096} & \multicolumn{1}{c|}{0.106} & \multicolumn{1}{c|}{0.094}  & \multicolumn{1}{c|}{0.110} & \multicolumn{1}{c|}{0.219} & 0.116 \\ \hline
\multirow{3}{*}{20}                             & All                                      & \multicolumn{1}{c|}{0.171} & \multicolumn{1}{c|}{0.159} & \multicolumn{1}{c|}{0.162}  & \multicolumn{1}{c|}{0.169} & \multicolumn{1}{c|}{0.274} & 0.184 \\ \cline{2-8} 
                                                & \multicolumn{1}{l|}{Remove 5\% outliers} & \multicolumn{1}{c|}{0.132} & \multicolumn{1}{c|}{0.127} & \multicolumn{1}{c|}{0.126}  & \multicolumn{1}{c|}{0.132} & \multicolumn{1}{c|}{0.235} & 0.143 \\ \cline{2-8} 
                                                & Remove 15\% outliers                     & \multicolumn{1}{c|}{0.084} & \multicolumn{1}{c|}{0.094} & \multicolumn{1}{c|}{0.082}  & \multicolumn{1}{c|}{0.098} & \multicolumn{1}{c|}{0.179} & 0.103 \\ \hline
\multirow{3}{*}{25}                             & All                                      & \multicolumn{1}{c|}{0.164} & \multicolumn{1}{c|}{0.150} & \multicolumn{1}{c|}{0.157}  & \multicolumn{1}{c|}{0.159} & \multicolumn{1}{c|}{0.254} & 0.175 \\ \cline{2-8} 
                                                & \multicolumn{1}{l|}{Remove 5\% outliers} & \multicolumn{1}{c|}{0.126} & \multicolumn{1}{c|}{0.120} & \multicolumn{1}{c|}{0.121}  & \multicolumn{1}{c|}{0.125} & \multicolumn{1}{c|}{0.215} & 0.132 \\ \cline{2-8} 
                                                & Remove 15\% outliers                     & \multicolumn{1}{c|}{0.079} & \multicolumn{1}{c|}{0.089} & \multicolumn{1}{c|}{0.078}  & \multicolumn{1}{c|}{0.090} & \multicolumn{1}{c|}{0.156} & 0.095 \\ \hline
\multirow{3}{*}{30}                             & All                                      & \multicolumn{1}{c|}{0.153} & \multicolumn{1}{c|}{0.138} & \multicolumn{1}{c|}{0.146}  & \multicolumn{1}{c|}{0.146} & \multicolumn{1}{c|}{0.229} & 0.162 \\ \cline{2-8} 
                                                & \multicolumn{1}{l|}{Remove 5\% outliers} & \multicolumn{1}{c|}{0.116} & \multicolumn{1}{c|}{0.111} & \multicolumn{1}{c|}{0.110}  & \multicolumn{1}{c|}{0.114} & \multicolumn{1}{c|}{0.191} & 0.122 \\ \cline{2-8} 
                                                & Remove 15\% outliers                     & \multicolumn{1}{c|}{0.072} & \multicolumn{1}{c|}{0.080} & \multicolumn{1}{c|}{0.070}  & \multicolumn{1}{c|}{0.083} & \multicolumn{1}{c|}{0.132} & 0.088 \\ \hline
\end{tabular}
\label{tb:3clusters}
\end{table}

\begin{table}[ht]
\small
\centering
\caption{The simulated external datasets. Homogeneous scenario: one cluster, heterogeneous scenarios: two clusters and three clusters.}
\begin{tabular}{|c|c|c|}
  \hline
Cluster & Number of Datasets & Sample Size in Each Dataset  \\
\hline 
1       & 10                 & 17, 71, 62, 72, 57, 31, 71, 94, 57, 15  
 \\ \hline 
2       & 12                 & 88, 50, 56, 99, 69, 25, 15, 81, 95, 95, 48, 40  
 \\ \hline
3       & 25                 & \begin{tabular}[c]{@{}l@{}}  39, 98, 25, 97, 63, 84, 57, 29, 76, 45, 61, 31, \\ 58, 51, 68, 93, 20, 64, 17, 55, 94, 75, 86, 55, 79 \end{tabular} 
 \\ \hline
\end{tabular}
\label{tb:simParaEst}
\end{table}

\begin{figure}[ht]
\centering
\includegraphics[width=\textwidth]{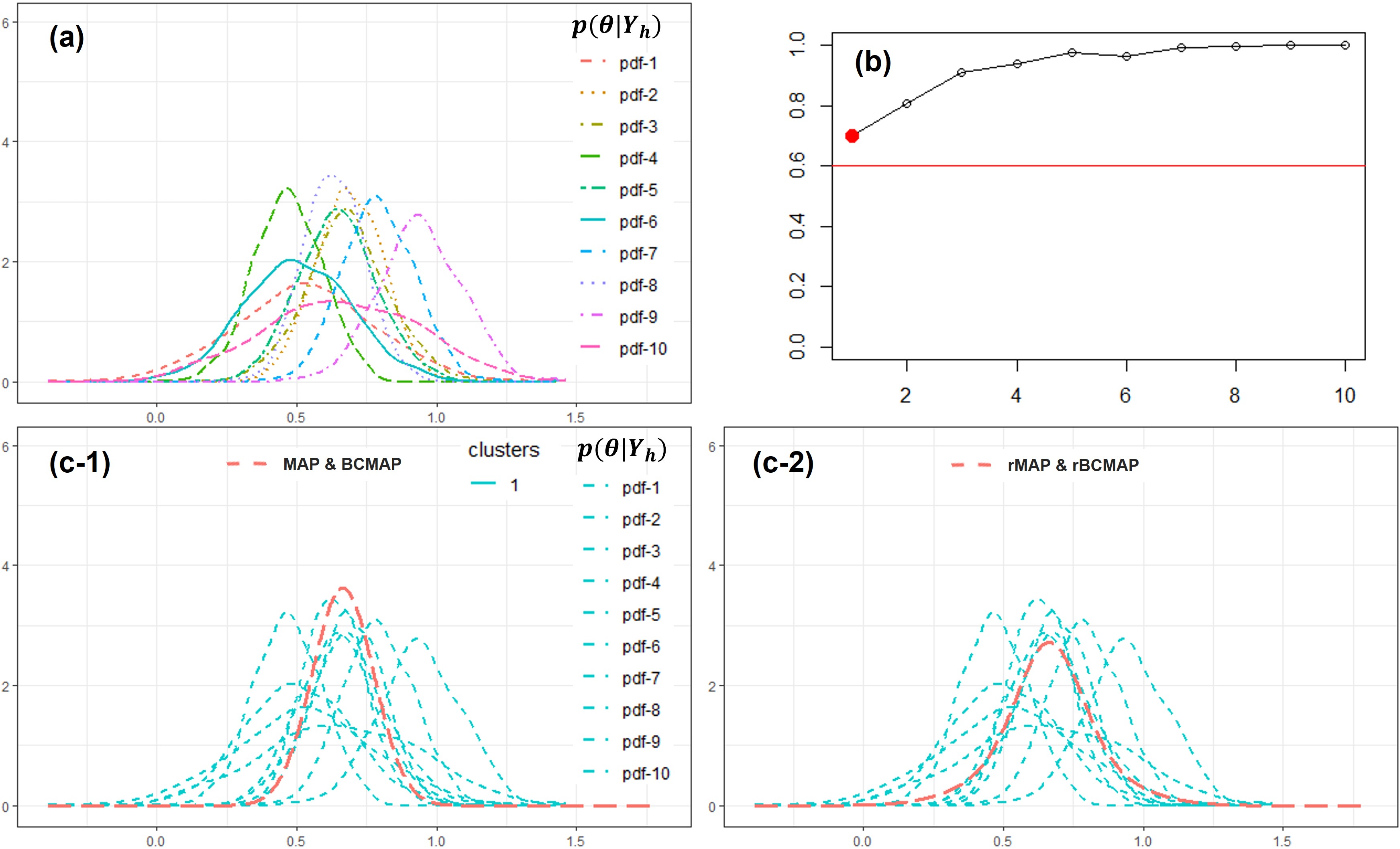}
\caption[bcp]{External data with one cluster. Panel (a): the posteriors with weakly informative prior. Panel (b): the optimal number of clusters. Panel (c-1): the synthesized MAP and BCMAP. Panel (c-2): the synthesized rMAP and rBCMAP. }
\label{fig:simOneCluster}
\end{figure}

\begin{figure}[ht]
\centering
\includegraphics[width=\textwidth]{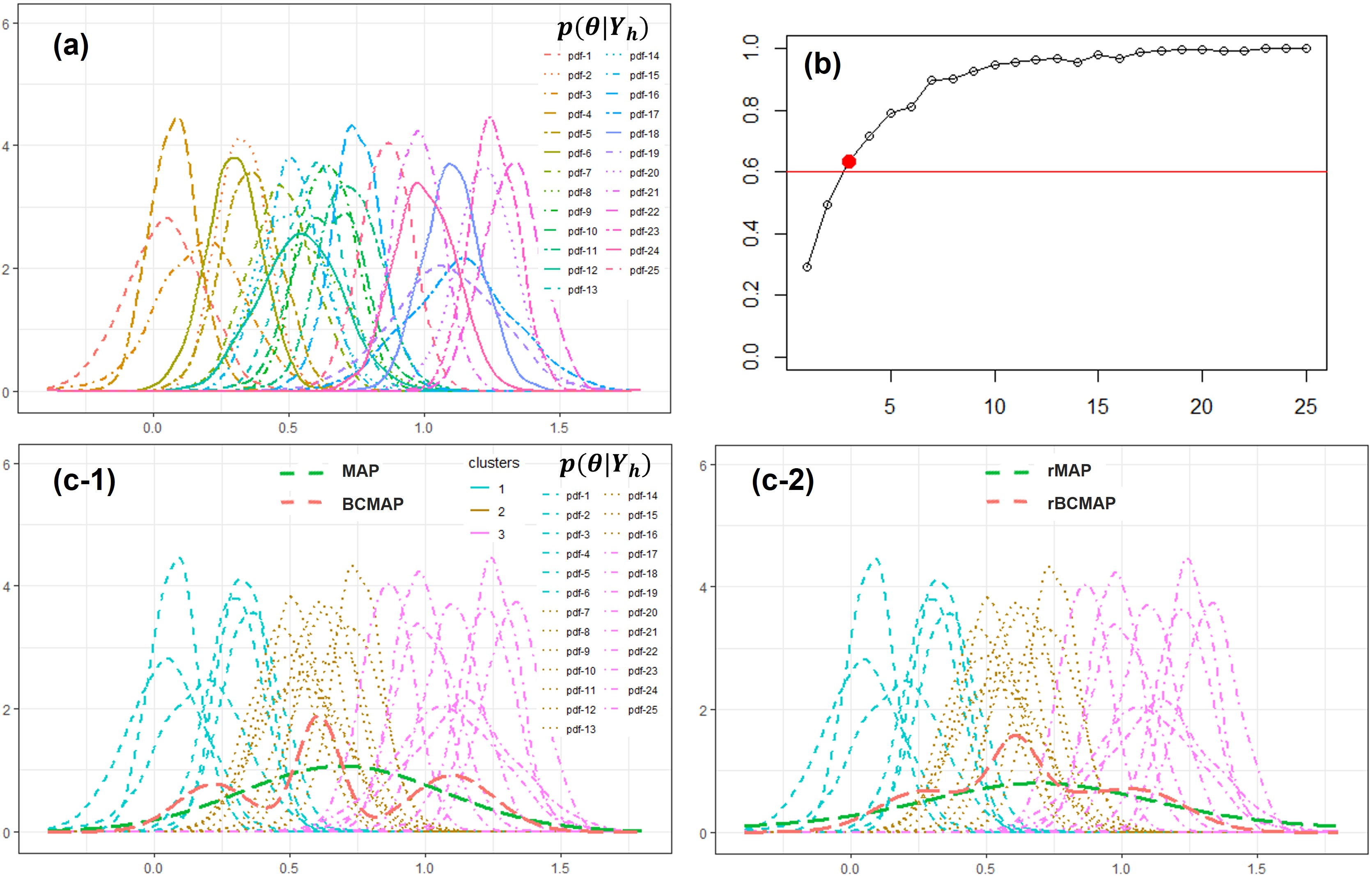}
\caption[bcp]{External data with three clusters. Panel (a): the posteriors with weakly informative prior. Panel (b): the optimal number of clusters. Panel (c-1): the synthesized MAP and BCMAP. Panel (c-2): the synthesized rMAP and rBCMAP. }
\label{fig:simThreeClusters}
\end{figure}

\begin{table}[ht]
\small
\caption{Studies on acupoint P6 stimulation for preventing nausea. The variables are: \textbf{id:} trial id number. \textbf{study:} first author of the study. \textbf{year:} study year. \textbf{ai:} number of patients experiencing nausea in treatment (wrist acupuncture point P6 treatment) group. \textbf{n1i:} total number of patients in treatment group.}
\centering
\begin{tabular}{lllll}
\hline
\textbf{id} & \textbf{study} & \textbf{year} & \textbf{ai ($r_h$)} & \textbf{n1i ($N_h$)}  \\ \hline
1           & Agarwal        & \textit{2000} & \textbf{18} & \textbf{100}         \\
2           & Agarwal        & \textit{2002} & \textbf{5}  & \textbf{50}       \\
3           & Alkaissi       & \textit{1999} & \textbf{9}  & \textbf{20}          \\
4           & Alkaissi       & \textit{2002} & \textbf{32} & \textbf{135}         \\
5           & Allen          & \textit{1994} & \textbf{9}  & \textbf{23}           \\
6           & Andrzejowski   & \textit{1996} & \textbf{11} & \textbf{18}            \\
7           & Duggal         & \textit{1998} & \textbf{69} & \textbf{122}     \\
8           & Dundee         & \textit{1986} & \textbf{3}  & \textbf{25}        \\
9           & Ferrera-Love   & \textit{1996} & \textbf{1}  & \textbf{30}       \\
10          & Gieron         & \textit{1993} & \textbf{11} & \textbf{30}         \\
11          & Harmon         & \textit{1999} & \textbf{7}  & \textbf{44}       \\
12          & Harmon         & \textit{2000} & \textbf{4}  & \textbf{47}        \\
13          & Ho             & \textit{1996} & \textbf{1}  & \textbf{30}         \\
14          & Rusy           & \textit{2002} & \textbf{24} & \textbf{40}          \\
15          & Wang           & \textit{2002} & \textbf{16} & \textbf{50}         \\
16          & Zarate         & \textit{2001} & \textbf{28} & \textbf{110}      \\ \hline
\end{tabular}
\label{tb:real}
\end{table}

\begin{figure}[t!]
\centering
\includegraphics[width=\textwidth]{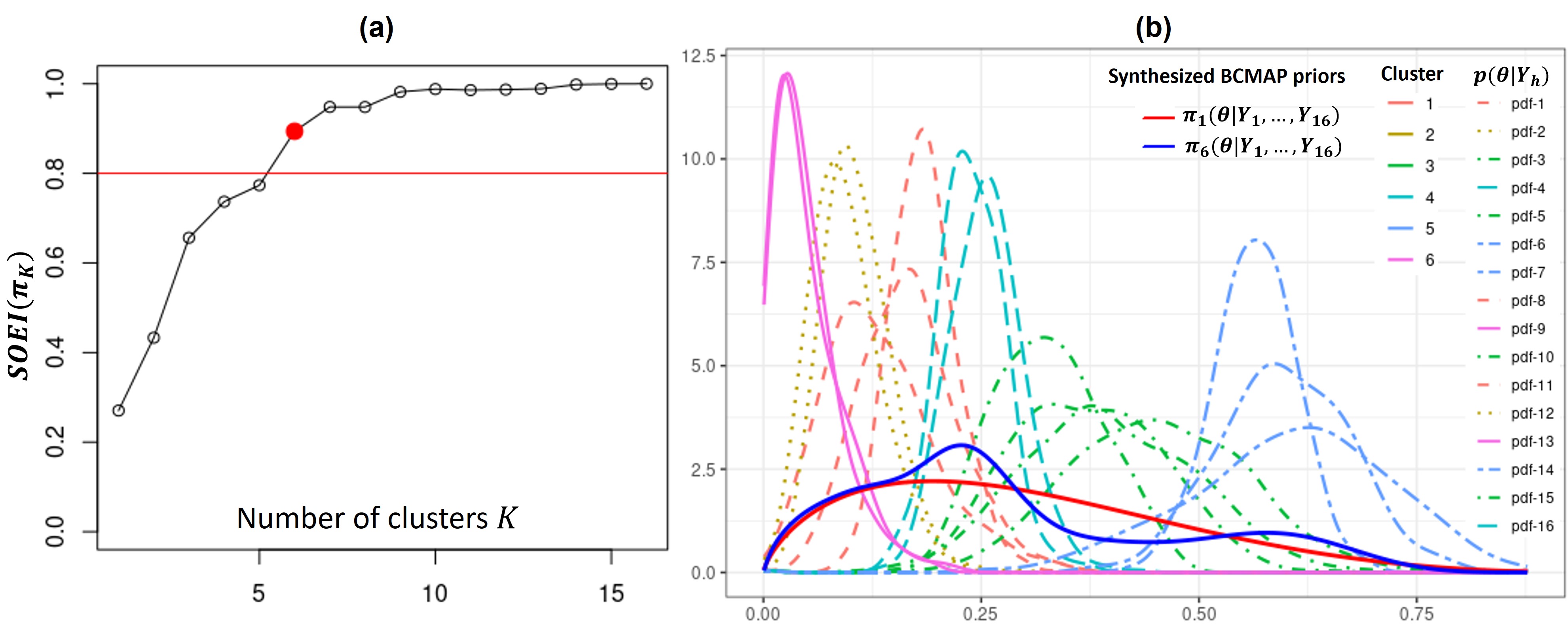}
\caption[bcp]{Bayesian clustering prior constructed on the wrist acupuncture point P6 treatment data from 16 studies. Panel (a) shows the selection of optimal number of clusters $K^*=6$. Panel (b) presents the clustering results and corresponding parameterized BCMAP prior $\pi_6(p|Y_1,\ldots,Y_{16})$. As a comparison, the parameterized MAP prior $\pi_1(p|Y_1,\ldots,Y_{16})$ is also shown in Panel (b). }
\label{fig:realData2}
\end{figure}

\end{appendix}
\FloatBarrier


\section{Funding}
J. Jack Lee's research was supported in part by the grants P30CA016672, P01CA285249, P50CA221703, U24CA224285, and U24CA224020 from the National Cancer Institute.


\newpage

\bibliographystyle{chicago}

\bibliography{BCPOI}
\end{document}